\newtheorem{theorem}{Theorem}[section]
\newtheorem{lemma}[theorem]{Lemma}
\newtheorem{claim}[theorem]{Claim}
\newtheorem{definition}[theorem]{Definition}
\newtheorem{observation}[theorem]{Observation}
\newenvironment{prevproof}[2]{\noindent {\em {Proof of
{#1}~\ref{#2}:}}}{\hfill $\square$\vskip \belowdisplayskip}
\newcommand{\set}[1]{\{ #1 \}}
\newcommand{\union}{\cup}
\newcommand{\intersect}{\cap}
\newcommand{\sm}{\setminus}
\newcommand{\ceiling}[1]{\lceil {#1} \rceil}
\DeclareMathOperator{\poly}{poly}
\newcommand{\RR}{\mathbb{R}}
\newcommand{\RRp}{\RR^+}
\newcommand{\NN}{\mathbb{N}}
\def\X{\mathcal{X}}
\def\R{\mathcal{R}}
\def\A{\mathcal{A}}
\def\C{\mathcal{C}}
\def\D{\mathcal{D}}
\def\V{\mathcal{V}}
\def\eps{\epsilon}
\def\sse{\subseteq}
\renewcommand{\comment}[1]{}
\newcommand{\hardness}{\compclass{NP}\not\sse
  \compclass{P/Poly}} 
\newcommand{\unhardness}{\compclass{NP}\sse \compclass{P/Poly}} 
\title{Amplified Hardness of Approximation for VCG-Based Mechanisms}
\author{Shaddin Dughmi \\
Department of Computer Science \\
Stanford University \\
{\tt shaddin@cs.stanford.edu} \and
Hu Fu\\
Department of Computer Science \\
Cornell University \\
{\tt hufu@cs.cornell.edu} \and
Robert Kleinberg \\
Department of Computer Science \\
Cornell University \\
{\tt rdk@cs.cornell.edu}}
\begin{document}

\maketitle \setcounter{page}{0} \thispagestyle{empty}

\begin{abstract}
  If a two-player social welfare maximization problem does not admit a
  PTAS, we prove that any maximal-in-range truthful mechanism that
  runs in polynomial time cannot achieve an approximation factor
  better than $1/2$.  Moreover, for the $k$-player version of the same
  problem, the hardness of approximation improves to $1/k$ under the
  same two-player hardness assumption.  (We note that $1/k$ is
  achievable by a trivial deterministic maximal-in-range mechanism.)
  This hardness result encompasses not only deterministic
  maximal-in-range mechanisms, but also all universally-truthful
  randomized maximal in range algorithms, as well as a class of
  strictly more powerful truthful-in-expectation randomized mechanisms
  recently introduced by Dobzinski and Dughmi.  Our result applies to
  any class of valuation functions that satisfies some minimal closure
  properties. These properties are satisfied by the valuation
  functions in all well-studied APX-hard social welfare maximization
  problems, such as coverage, submodular, and subadditive valuations.

  We also prove a stronger result for universally-truthful maximal-in-range
  mechanisms.  Namely, even for the class of budgeted additive
  valuations, which admits an FPTAS, no such mechanism can achieve an
  approximation factor better than $1/k$ in polynomial time.
\end{abstract}

\newpage

\newcommand{\valclass}{{\mathcal{C}}}
\newcommand{\compclass}[1]{{\mathsf{#1}}}

\section{Introduction} \label{sec:intro}

Do computational problems become harder when the inputs are supplied 
by selfish agents and the algorithm is required to operate in a way that
incentivizes truth-telling?  This question has been central to algorithmic
mechanism design since the field's inception~\cite{NR01}.  
The most famous positive
result in the area is also one of the simplest: any efficient 
social-welfare-maximization algorithm can be transformed into a 
computationally efficient truthful mechanism using the celebrated
VCG payment scheme~\cite{Cla71,Gro73,Vic61}.  
It is also well known that this
result does not extend to approximation algorithms: in order 
for an algorithm to be truthfully implemented by the VCG
payment scheme, the algorithm must satisfy a property known
as \emph{maximal-in-range} (MIR)~\cite{NR07}, which is
unfortunately violated by most approximation algorithms.
However, the technique of combining a maximal-in-range algorithm with
the VCG payment scheme remains the only known general-purpose
technique for designing truthful mechanisms for multi-parameter
domains\footnote{A multi-parameter domain is one in which an agent's 
private information consists of more than just a single real-valued
parameter.}, and consequently a great deal of research has been
devoted to searching for computationally efficient approximation
algorithms that are maximal-in-range, e.g.~\cite{DN07b,DNS06,LS05},
or proving hardness-of-approximation theorems for this class
of algorithms, e.g.~\cite{DN07a,MPSS09,PSS08}.  

Combinatorial auctions are the most well-studied, and arguably the
most important, class of mechanism design problems, and they furnish
striking insights into the capabilities and limitations of
maximal-in-range mechanisms.  One can bound the approximation ratio of
a combinatorial auction mechanism in terms of many parameters,
including the number of players (henceforth denoted by $k$) or the
number of items (henceforth, $m$).  A particularly bleak picture
emerges when one bounds the approximation ratio in terms of the number
of players.  Any combinatorial auction with $k$ players has a trivial
mechanism that simply packages all the items as a single bundle and
awards this bundle to the bidder who values it most highly.  This
mechanism is computationally efficient, maximal in range, but only
achieves approximation ratio $1/k$.  Despite years of research on
truthful combinatorial auctions, to our knowledge this dependence on
$k$ can be improved in only one combinatorial auction domain: the
domain of multi-unit auctions.  Here, the underlying social welfare
maximization problem admits a deterministic FPTAS that is not
maximal-in-range.  Dobzinski and Nisan~\cite{DN07b} discovered a
deterministic $2$-approximation that is MIR, and Dobzinski and
Dughmi~\cite{DD09} discovered an FPTAS that outputs a randomized
allocation satisfying a property called \emph{maximal in
  distributional range} (MIDR).  In fact, the Dobzinski-Dughmi
mechanism satisfies a stronger property that we call \emph{maximal in
  weighted range} (MIWR), meaning that the only use of randomization
is to cancel the allocation (i.e., allocate no items) with some
probability.

\subsection{Our contributions}
\label{sec:contributions}

In this paper we show that there is an inherent reason why 
truthful mechanisms cannot break the ``$1/k$ barrier'' for
combinatorial auctions: \emph{any approximation hardness
at all in the underlying social welfare maximization problem
is amplified to $(1/k + \eps)$-hardness when one
restricts the algorithm to be maximal in range.}  In fact,
our result extends to two classes of randomized 
MIDR mechanisms: those that choose an allocation deterministically
and then toss coins to decide whether to cancel the allocation
(MIWR mechanisms) and those that toss coins to choose a 
deterministic MIR mechanism and then run it
(randomized MIR mechanisms), as well as the combination of
the two.
Our result applies to any class of valuations that is 
\emph{regular}, meaning that it satisfies some natural
closure properties to be specified in Section~\ref{sec:preliminaries}.
These properties are
satisfied by the valuation functions in all well-studied APX-hard
social welfare maximization problems, such as coverage,
submodular, XOS, and subadditive valuations.

\begin{theorem} 
Let $\valclass$ be any regular class of valuations such that two-player
social welfare maximization with valuations in $\valclass$
does not admit a PTAS.  Then for all $\eps>0$,
there is no polynomial-time 
randomized MIWR mechanism that achieves an
approximation ratio greater than $1/k + \eps$ unless
$\compclass{NP} \sse \compclass{P}/\poly$.
\end{theorem}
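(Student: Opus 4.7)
The plan is to argue by contrapositive: I assume a polynomial-time randomized MIWR mechanism $M$ that achieves expected approximation ratio $\alpha = 1/k + \eps$ on every $k$-player instance over $\valclass$, and from $M$ I build a polynomial-time $(1-\delta)$-approximation algorithm for two-player welfare maximization over $\valclass$ for any fixed $\delta > 0$. This contradicts the no-PTAS hypothesis, with the standard caveat $\compclass{NP} \not\sse \compclass{P}/\poly$ needed to derandomize the expectation-only guarantee of $M$ into a worst-case one. As setup, I describe $M$ via its underlying range $R$ of deterministic $k$-allocations together with weights $w_r \in (0,1]$: on any input $\mathbf{v}$, $M$ selects $r^*(\mathbf{v}) \in \arg\max_{r \in R} w_r \sum_i v_i(r_i)$ and realizes it with probability $w_{r^*}$, so the approximation guarantee reads $w_{r^*} \cdot \sum_i v_i(r_i^*) \geq \alpha \cdot \operatorname{OPT}(\mathbf{v})$ for every $\mathbf{v}$.

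Given a two-player input $(v_1^*, v_2^*)$, the reduction queries $M$ on a polynomially-enumerable family $\mathcal{F}$ of $k$-player extensions of this input: each extension keeps $v_1 = v_1^*, v_2 = v_2^*$ and selects auxiliary valuations $(v_3, \ldots, v_k) \in \valclass^{k-2}$. The closure properties packaged into the regularity of $\valclass$ (zero valuation, scalar multiples, simple combinations) are precisely what let me construct a sufficiently diverse such $\mathcal{F}$. For each extension I run $M$, project its output onto the first two players by greedily reassigning items allocated to players $3, \ldots, k$ to whichever of player 1 or player 2 gains more two-player welfare, and finally return the projection with the highest two-player welfare across $\mathcal{F}$.

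The main obstacle is showing that for at least one extension in $\mathcal{F}$, the projection has welfare at least $(1-\delta) \cdot \operatorname{OPT}_2(v_1^*, v_2^*)$. I would argue this by contradiction at the level of $(R, w)$: if every extension produced a projection strictly below $(1-\delta) \cdot \operatorname{OPT}_2$, then since $(R, w)$ is fixed across all of $\mathcal{F}$, I would exhibit an extension on which $M$'s weighted-maximum allocation necessarily violates the $\alpha$-guarantee. The key leverage is the decomposition $\sum_i v_i(r_i) = [v_1^*(r_1) + v_2^*(r_2)] + \sum_{i \geq 3} v_i(r_i)$: by choosing the auxiliary profiles in $\mathcal{F}$ adversarially against $R$, the welfare available from players $3, \ldots, k$ can be calibrated to the two-player gap, so that a universal gap across all projections forces a concrete $\alpha$-violation on some extension via a pigeonhole / averaging step. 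Pinning down $\mathcal{F}$ and sharpening this step is the delicate part, and is where the factor-$k$ amplification from $\alpha = 1/k + \eps$ up to $1 - \delta$ is paid for. The randomization in $M$ is absorbed at the end by a standard nonuniform argument under $\compclass{NP} \not\sse \compclass{P}/\poly$, converting the randomized PTAS into a deterministic one and completing the contradiction.
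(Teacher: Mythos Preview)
Your proposal has a genuine gap at exactly the point you flag as ``the delicate part.'' You assert that by choosing the auxiliary profiles $(v_3,\ldots,v_k)$ ``adversarially against $R$'' and applying a pigeonhole/averaging step, a $(1/k+\eps)$-guarantee on some extension forces a $(1-\delta)$-approximation to $\operatorname{OPT}_2$. But you give no mechanism for this amplification, and there is a concrete obstruction: with the simplest extension (all auxiliary valuations zero), $\operatorname{OPT}$ equals $\operatorname{OPT}_2$, the mechanism returns a weighted allocation with expected welfare $\geq (1/k+\eps)\operatorname{OPT}_2$, and your projection step recovers exactly that---a $(1/k+\eps)$-approximation for two players, not $(1-\delta)$. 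To do better you must force the mechanism to place most of its welfare on players $1,2$, yet you cannot ``choose adversarially against $R$'' because $R$ is unknown to the reduction and may have exponential size; a polynomial family $\mathcal{F}$ built only from regularity closures has no handle on which allocations lie in $R$ or what their weights are. Nothing in your sketch explains how the auxiliary valuations extract structural information about $(R,w)$.

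The paper's proof is organized very differently and supplies precisely the missing structural lever. It does not reduce a single $2$-player instance to $k$-player extensions. Instead it proceeds by induction on $k$ and feeds the mechanism \emph{random perfect valuation profiles}; a covering/shattering lemma (Lemma~\ref{lem:shatter-covering}) then shows that if the range achieves approximation better than $(q-1)/k$ on a constant fraction of such profiles, the range must contain \emph{every} allocation of some $\Omega(m)$-sized item subset to some $q$ players. This ``shattered'' subcube is what allows embedding a smaller instance on which the mechanism optimizes (nearly) exactly---and that is where the amplification comes from. The argument further stratifies the range by weight class, uses the inductive hypothesis on $q<k$ players for the heavy classes, and for the lightest class mixes the target instance with a random perfect profile to extract a $(1-\eta)$-approximation. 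Your proposal contains no analogue of the shattering step, the weight-class decomposition, or the induction on $k$; without at least the first of these, the amplification from $1/k+\eps$ to $1-\delta$ cannot go through.
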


%

Even when social welfare maximization over $\valclass$
admits an FPTAS, it may still be possible to prove
that maximal-in-range mechanisms cannot improve on 
the trivial $1/k$ approximation factor.  In fact, we are
able to show this for an important class of valuations
that admits an FPTAS: the class of \emph{budgeted additive
valuations}, in which each player $i$ has a budget $B_i$,
and her value for a bundle is equal to the sum of her
values for the individual items, or to $B_i$, whichever
is smaller.  However, unlike Theorem~\ref{thm:main}, this
hardness result is limited to universally-truthful randomized maximal-in-range
mechanisms.

\begin{theorem} \label{thm:bav}
For all $\eps>0$,  no polynomial-time 
randomized MIR mechanism for
combinatorial auctions with budgeted additive
valuations can achieve an
approximation ratio greater than $1/k + \eps$, unless
$\compclass{NP} \sse \compclass{P}/\poly$.
\end{theorem}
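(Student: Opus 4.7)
The plan is to reduce a gap-amplified NP-hard problem to calls to the mechanism, extracting the YES/NO answer by thresholding the returned allocation's welfare. Suppose for contradiction that a polynomial-time randomized MIR mechanism $\mathcal{M}$ achieves approximation factor strictly greater than $1/k+\epsilon$ on combinatorial auctions with budgeted additive valuations. A Markov argument converts this in-expectation guarantee into a lower bound of $\Omega(\epsilon)$ on the probability that $\mathcal{M}(v)$ returns an allocation with welfare strictly above $(1/k)\cdot\mathrm{OPT}(v)$, for every input $v$.

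The first step is to construct a gap family of budgeted additive instances $\{v_\phi\}$, indexed by inputs $\phi$ to an NP-hard problem, satisfying (i) $\mathrm{OPT}(v_\phi) = W_{\mathrm{YES}}$ on YES inputs and $\mathrm{OPT}(v_\phi) \leq W_{\mathrm{NO}}$ on NO inputs; and (ii) a gap $W_{\mathrm{YES}}/W_{\mathrm{NO}} > k/(1+k\epsilon)$, chosen precisely so that a $(1/k+\epsilon)$-approximation of $W_{\mathrm{YES}}$ strictly exceeds $W_{\mathrm{NO}}$. Given this family, the decision procedure for the NP-hard problem is direct: on input $\phi$, build $v_\phi$, run $\mathcal{M}$ on $v_\phi$ for $\poly(1/\epsilon)$ independent trials, and accept iff some trial returns welfare exceeding $W_{\mathrm{NO}}$. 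By the Markov argument YES inputs are accepted with high probability, while NO inputs cannot be accepted since no allocation achieves welfare above $W_{\mathrm{NO}}=\mathrm{OPT}(v_\phi)$. This places the NP-hard problem in $\compclass{BPP}$, which by Adleman's theorem lies in $\compclass{P}/\poly$, yielding $\compclass{NP}\sse\compclass{P}/\poly$.

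The main obstacle lies in the gadget construction. Budgeted additive is a narrow valuation class---items interact only through each bidder's budget cap---so the reduction must extract all its combinatorial leverage from how budgets interact with item bundles. A natural template associates items with clauses or variables of the NP-hard input and bidders with ``roles'' in a candidate satisfying assignment, with budgets calibrated so that optimum welfare is attained only when the item partition among bidders respects a valid assignment. Achieving the $k$-factor gap required in (ii), rather than a mere constant, is the most delicate aspect: it likely requires either a base problem with inherently polynomial-factor inapproximability, or explicit gap amplification via parallel repetition, all while staying within the additive-plus-budget format. The reliance on a clean welfare threshold is also what restricts the theorem to MIR mechanisms: a more general MIDR mechanism could randomize its effective range across inputs and evade the threshold-based decoding, which is why extending this hardness to MIDR for budgeted additive valuations appears to require substantially different techniques.
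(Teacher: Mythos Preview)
Your approach has a fundamental gap: it never uses the maximal-in-range property of the mechanism. Your proposed reduction---build hard instances, run the mechanism, threshold on returned welfare---would apply verbatim to \emph{any} polynomial-time approximation algorithm, not just MIR ones. Consequently, if your gadget construction succeeded, it would prove that no polynomial-time algorithm achieves a $(1/k+\epsilon)$-approximation for budgeted additive valuations. But this is false: as the paper notes (citing Andelman--Mansour), social welfare maximization for budgeted additive valuations admits an FPTAS. So the gap family you need, with $W_{\mathrm{YES}}/W_{\mathrm{NO}}$ arbitrarily close to $k$, provably does not exist. Your closing remark that the ``clean welfare threshold'' is what restricts the argument to MIR mechanisms is not correct; the threshold test works for any algorithm, and that is exactly the problem.

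The paper's proof exploits MIR in an essential way. Because an MIR mechanism commits to a fixed range $\mathcal{R}$ of allocations and then optimizes exactly over $\mathcal{R}$, one can reason about the \emph{structure} of $\mathcal{R}$ rather than about a single output. The argument first exhibits exponentially many ``$\epsilon$-apart'' perfect valuation profiles and shows (Lemma~\ref{lemma:smallwelfare}) that no single allocation can do well on many of them simultaneously; hence a $(1/k+\epsilon)$-approximate range must contain exponentially many allocations that are pairwise far in Hamming distance. A shattering lemma (Lemma~\ref{lem:shatter-eps-far}) then extracts a linear-sized item subset $S$ and two bidders such that \emph{every} allocation of $S$ to those bidders occurs as a restriction of some element of $\mathcal{R}$. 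Feeding a smaller two-player instance on $S$ into the mechanism (with the shattered set as non-uniform advice) now yields an \emph{exact} optimizer, and hence $\compclass{NP}\subseteq\compclass{P}/\poly$. The MIR property is indispensable here: it is what guarantees the mechanism will find the optimal allocation once the range is known to contain it.
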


\subsection{Our techniques}
\label{sec:techniques}

A paradigm for proving hardness results of this sort was
introduced by Papadimitriou, Schapira, and Singer
in~\cite{PSS08}.  
To prove that a certain approximation ratio
cannot be achieved by maximal-in-range mechanisms, one
proves that the underlying social welfare maximization 
problem exhibits a particularly strong form of 
self-reducibility: any maximal-in-range algorithm 
for optimizing over a sufficiently large subset
of allocations can 
be transformed into an algorithm for optimizing over 
\emph{all allocations} of a smaller set of items.
The technical core of any such proof is a lemma
showing that any sufficiently large range of allocations
must ``shatter'' a fairly large subset $S$ of the items,
meaning that there is a set of players $P$ such 
that all allocations of $S$ to $P$ occur in the
range.  In~\cite{PSS08} the relevant shattering
lemma was the famous Sauer-Shelah Lemma.  But since
the Sauer-Shelah Lemma is a statement about collections
of a \emph{subsets} of a ground set $U$, and 
the range of a combinatorial auction is a 
collection of \emph{partial functions} from $U$
to the set of players, we require new shattering lemmas 
that apply to partial functions.

Extending the Sauer-Shelah Lemma to partial functions
is far from trivial: a lower bound on the cardinality
of the range does not suffice to prove that it shatters
a large set of items; for example, the set of all
allocations that give a subset of the items to
player $1$ and no items to any other player constitutes
an exponentially large range but does not shatter
any nonempty set of items.  Thus, one needs to 
carefully define what is meant by the hypothesis 
that the range is ``large'', and also (in the case
of more than two players) what is meant by the
conclusion that it ``shatters'' a large set of items.
In this paper, we provide two such lemmas.  In both
of them, $U$ and $V$ are finite sets with $|U|=m, \,
|V|=k,$ and $R$ is a set of functions from $U$ 
to $V \cup \set{\ast}.$

\begin{lemma} \label{lem:shatter-covering}
Suppose that for a random $f:U \rightarrow V$,
with probability at least $\gamma$ there
is a $g \in R$ such that $g(x)$ differs from $f(x)$
on at most $\left(1-\frac{q-1}{k}-\eps\right)m$ 
elements $x \in U$.  Then there is a subset $S \subseteq U$ of
cardinality at least $\delta m$ (where $\delta>0$
may depend on $\gamma,\eps,q,k$) and
a subset $T \subseteq V$ of cardinality $q$,
such that every function from $S$ to $T$ occurs
as the restriction of some $g \in R$.
\end{lemma}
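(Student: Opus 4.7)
The plan is to argue in two steps: first use the coverage hypothesis with a Chernoff bound to show that $|R|$ must be exponentially large in $m$, and then apply a standard $(k+1)$-ary generalization of the Sauer-Shelah-Pajor lemma to extract a linearly-sized strongly shattered subset. This avoids the need for any palette-selection reduction, since the conclusion of the lemma allows $T$ to be any $q$-subset of $V$.

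\emph{Step 1 (exponential lower bound on $|R|$).} Set $\alpha = (q-1)/k + \eps$. For any fixed $g \in R$ with non-$\ast$ domain of size $d \leq m$, the agreement count with a uniformly random $f : U \to V$ is binomially distributed with parameters $(d, 1/k)$. Since $\alpha > 1/k$, a Chernoff bound gives $\Pr[\mathrm{agreement} \geq \alpha m] \leq \exp(-d \cdot D(\alpha m/d \,\|\, 1/k))$, where $D$ is binary relative entropy. Writing $p = \alpha m / d \in [\alpha, 1]$, this exponent equals $\alpha m \cdot (1/p) D(p \,\|\, 1/k)$; since $(1/p) D(p \,\|\, 1/k)$ is continuous and strictly positive on the compact interval $[\alpha, 1]$, it has a positive minimum $c_0 = c_0(\alpha, k) > 0$, giving the uniform bound $\Pr[\mathrm{agreement} \geq \alpha m] \leq e^{-c_0 \alpha m}$ over all feasible $d$. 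A union bound over $g \in R$, combined with the $\gamma$-coverage hypothesis, yields $|R| \geq \gamma e^{cm}$ with $c = c_0 \alpha > 0$ depending on $\eps, q, k$.

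\emph{Step 2 (extracting the shattered set).} Viewing $R$ as a subset of the $(k+1)$-ary cube $(V \cup \{\ast\})^U$, I would invoke a standard $(k+1)$-ary Sauer-Shelah-Pajor bound for the strong shattering dimension: any family $\mathcal{F} \subseteq [k+1]^m$ with $|\mathcal{F}| > \sum_{i \leq d - 1} \binom{m}{i} k^i$ must strongly shatter some subset of $[m]$ of size $d$, meaning that all $(k+1)^d$ functions from that subset to $[k+1]$ arise as restrictions of $\mathcal{F}$. Using the crude estimate $\sum_{i \leq d - 1}\binom{m}{i}k^i \leq 2^{H(d/m) m + d \log_2 k}$, the bound $|R| \geq \gamma e^{cm}$ forces a strongly shattered $S \subseteq U$ of size $\delta m$, with $\delta = \delta(\gamma, \eps, q, k) > 0$. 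To finish, pick any $T \subseteq V$ of size $q$ (for instance the lexicographically first such subset): every $h : S \to T$ lies among the $(k+1)^{|S|}$ patterns realized as restrictions of elements of $R$ to $S$, so some $g \in R$ satisfies $g|_S = h$, which is the conclusion of Lemma~\ref{lem:shatter-covering}.

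\emph{Main obstacle.} The only non-elementary ingredient is the $(k+1)$-ary strong Sauer-Shelah-Pajor inequality. For $k+1 = 2$ this is the classical Sauer-Shelah lemma; the generalization to larger alphabets is folklore and admits a standard proof via a shifting argument operating on one pair of symbols at a time, reducing to the binary case (see e.g.\ Alon 1983, Karpovsky-Milman 1978, or Anstee-Sali 2005). Since Lemma~\ref{lem:shatter-covering} makes only a qualitative claim ($\delta > 0$), the precise constants are immaterial, provided the inequality forces an exponentially large family to have linear strong shattering dimension, which is the case for every standard formulation. The remaining bookkeeping --- verifying the Chernoff bound's uniformity in $d$ and tracking the dependence of $\delta$ on $\gamma, \eps, q, k$ --- is routine.
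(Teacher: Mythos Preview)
Your Step~2 rests on a false inequality. The claim that any $\mathcal{F} \subseteq [k+1]^m$ with $|\mathcal{F}| > \sum_{i \leq d-1} \binom{m}{i} k^i$ must strongly shatter a $d$-set (i.e.\ realize all $(k+1)^d$ patterns on it) is disproved already at $d=1$: take $k+1 = 3$ and $\mathcal{F} = \{0,1\}^m$, so $|\mathcal{F}| = 2^m \gg 1$, yet no coordinate ever sees the value~$2$. The $q$-ary Sauer-type bounds you cite control only weaker, Natarajan-style shattering notions --- two designated values per coordinate --- never full-alphabet strong shattering; cardinality alone simply cannot force the latter. The paper flags precisely this obstruction in the introduction: the range consisting of all allocations that give a subset of items to player~$1$ and nothing to anyone else has size $2^m$ but shatters no nonempty set. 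So while your Step~1 bound $|R| \geq \gamma e^{cm}$ is correct, it discards the one piece of structure that matters, namely that the covered functions $f$ take values in $V$ rather than in $V \cup \{\ast\}$.

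The paper's argument exploits the covering hypothesis more delicately. For each covered $f \in V^U$ it records the agreement set $I(f) \subseteq U$ (of size at least $((q-1)/k+\eps)m$) on which $f$ and its covering $g \in R$ coincide; a double count of pairs $(f,J)$ with $J \subseteq I(f)$ and $|J| = \eps m/2$ then pigeonholes to a single $J$ on which at least $(q-1+\eps')^{|J|}$ \emph{$V$-valued} restrictions are realized by elements of $R$. Only now does a genuine $q$-ary Sauer recursion (the paper's Lemma in Appendix~\ref{app:shatter}) apply, with threshold $(q-1+\eps')^{|J|}$ --- a threshold your Chernoff exponent does not reach in general (e.g.\ for $k=q=3$ one gets $c \approx \tfrac{1}{3}\log 2 < \log 2 = \log(q-1)$), even setting aside the $\ast$-symbol issue.
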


\begin{lemma} \label{lem:shatter-eps-far}
Suppose that $\eps,\alpha,\ell$ are constants such that 
$|R| > e^{\alpha m}$, and suppose that for every $\ell$-tuple of functions 
$g_1,\ldots,g_\ell \in R$, for some $1 \leq i < j \leq \ell$
there are at least $\eps m$ elements 
$x \in U$ such that $g_i(x)$ and $g_j(x)$ are distinct
elements of $V$.  Then there is a subset $S \subseteq U$
of cardinality at least $\delta m$ (where $\delta>0$
may depend on $\eps,\alpha,\ell,k$) and a pair of
elements $a,b \in V$, such that every function from
$S$ to $\{a,b\}$ occurs as the restriction of 
some $g \in R$.
\end{lemma}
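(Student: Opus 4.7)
The plan is to combine a Ramsey-type extraction using the $\ell$-tuple hypothesis with a classical Sauer-Shelah argument. The intermediate goal is to isolate a pair $(a, b) \in V \times V$ and a large subfamily of $R$ that effectively behaves like a family of $\{a, b\}$-valued functions on a sizeable subset of $U$.

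First, form the graph $H$ on vertex set $R$ whose edges are pairs $(g, h)$ that are $\eps$-far, meaning they disagree on $V$-values at $\geq \eps m$ coordinates. The $\ell$-tuple hypothesis says that $H$ has no independent set of size $\ell$, so Ramsey's theorem combined with $|R| > e^{\alpha m}$ and the bound $R(\ell, t) = O(t^{\ell-1})$ for constant $\ell$ produces a pairwise-$\eps$-far clique $R_0 \subseteq R$ of size $|R_0| \geq e^{\alpha_1 m}$ for some constant $\alpha_1 = \alpha_1(\alpha, \ell) > 0$. Next, each $(g, h) \in R_0 \times R_0$ distributes its $\geq \eps m$ disagreement positions across the $k(k-1)$ ordered value pairs $(v_1, v_2) \in V^2$, so by averaging some specific pair $(a, b)$ satisfies
\[
\sum_{(g, h) \in R_0 \times R_0} \bigl|\{x \in U : g(x) = a,\ h(x) = b\}\bigr| \;\geq\; \frac{\eps\, m\, |R_0|^2}{2\, k(k-1)}.
\]
Rewriting the left-hand side as $\sum_{x \in U} f_a(x) f_b(x)$ with $f_v(x) = |\{g \in R_0 : g(x) = v\}|$ and using $f_v(x) \leq |R_0|$, one extracts a subset $U^* \subseteq U$ of size $\geq \delta_0 m$ on which both $a$ and $b$ are taken by a constant fraction of $R_0$ at every coordinate.

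Finally, apply classical Sauer-Shelah. Pass to the subfamily $R_1 \subseteq R_0$ consisting of $g$'s that satisfy $|G_g| \geq c |U^*|$, where $G_g := \{x \in U^* : g(x) \in \{a, b\}\}$; Markov's inequality applied to the popularity guarantee from the previous step shows that $R_1$ retains an exponential fraction of $R_0$. A probabilistic argument then picks a subset $S \subseteq U^*$ of size $\delta m$ (with $\delta$ chosen small relative to the exponential rate of $|R_1|$) such that the restriction family $\{g|_S : g \in R_1,\ S \subseteq G_g\} \subseteq \{a, b\}^S$ has cardinality sufficient for classical Sauer-Shelah to produce a further shattered subset $S'' \subseteq S$ of size $\geq \delta' m$: every $\phi : S'' \to \{a, b\}$ equals $g|_{S''}$ for some $g$ in the family. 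Since $S \subseteq G_g$ for such $g$, the values $g(x)$ lie in $\{a, b\}$ for every $x \in S''$, so $g|_{S''} = \phi$ in the original family $R$. The main obstacle will be calibrating the parameters of this final probabilistic step: one must balance the exponential rate of $|R_1|$ against the shrinkage from restricting to $\{g : S \subseteq G_g\}$, so that the shattered $S''$ remains of linear size in $m$; this is where the precise quantitative strength of the $\ell$-tuple hypothesis (via $\alpha_1$) is needed.
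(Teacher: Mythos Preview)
Your approach diverges from the paper's and has a genuine gap in the final step.

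Steps 1--3 are fine: Ramsey gives a pairwise-$\eps$-far clique $R_0$ of exponential size, averaging isolates a popular ordered pair $(a,b)$, and a Markov argument produces $U^*$ of linear size on which both $f_a$ and $f_b$ are a constant fraction of $|R_0|$, together with the subfamily $R_1$.

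The problem is step 4. From the coordinate-wise popularity condition on $U^*$ you cannot conclude that the restriction family $\{g|_S : g \in R_1,\ S \subseteq G_g\}$ is large for any linear-sized $S \subseteq U^*$. Popularity at each coordinate says only that at each $x \in U^*$ many functions take value $a$ and many take value $b$; it says nothing about how these choices correlate \emph{across} coordinates. Concretely, take $R_1 = A \cup B$ where every $g \in A$ is identically $a$ on $U^*$ and every $g \in B$ is identically $b$ on $U^*$, with the required pairwise $\eps$-farness supplied entirely by disagreements on $U \setminus U^*$ using values in $V \setminus \{a,b\}$. All your intermediate estimates hold --- $f_a(x), f_b(x)$ are both large on every $x \in U^*$, and $G_g = U^*$ for every $g$ --- yet for every $S \subseteq U^*$ the restriction family has exactly two elements, so Sauer--Shelah shatters at most one point. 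The pairwise $\eps$-farness of $R_1$ does not help, because those disagreements live outside $U^*$. For $k$ large enough, the pair $(a,b)$ is even the unique maximizer in your averaging inequality, so you cannot escape by choosing a different pair.

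The paper avoids this by not committing to a single value-pair upfront. It runs a Pajor-style doubling recursion: the $\ell$-tuple hypothesis is used (via a simple grouping, no Ramsey needed) only to locate one pivot $x$ and one pair $(i^*, j^*)$ such that both subfamilies $R_{i^*} = \{g : g(x) = i^*\}$ and $R_{j^*} = \{g : g(x) = j^*\}$ have size at least $c|R|$. Since any set $(V,2)$-shattered by both subfamilies can be enlarged by $x$, the number of $(V,2)$-shattered sets satisfies $|I(R)| \geq |I(R_{i^*})| + |I(R_{j^*})|$, yielding $t(r) \geq 2\, t(cr)$ and hence $t(r) \geq r^\beta$; the linear-sized shattered set follows because shattered sets are downward closed. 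A single fixed pair $\{a,b\}$ is extracted only at the very end, by pigeonholing the per-coordinate pairs over an already-shattered set.
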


The first lemma, which underlies our proof of Theorem~\ref{thm:main}
and may be of independent interest,
substitutes an assumption that $R$ has small covering radius
in the Hamming metric in place of the usual assumption that
$R$ has large cardinality.  The second lemma, which  
underlies our proof of Theorem~\ref{thm:bav}, 
generalizes and closely parallels a related lemma 
from~\cite{MPSS09}. We prove both lemmas in Appendix \ref{app:shatter}.

To derive the lower bound for MIWR mechanisms, an
additional idea is needed: rather than reducing 
directly converting an $\alpha$-approximate MIR
algorithm into an exact optimization algorithm 
over a smaller set of items, we convert an
$(\alpha-\delta)$-approximate MIWR 
algorithm into an $(\alpha+\delta)$-approximate 
MIWR algorithm over a smaller set of items, 
and then we reach a contradiction
by taking $\alpha$ to be the supremum of the 
approximation ratios achievable by polynomial-time
MIWR mechanisms.  Translating this idea into a rigorous
proof requires a delicate induction over the number
of players.  Finally, to extend the result to randomized MIWR 
mechanisms, we show that any randomized MIWR mechanism 
can be transformed into a
MIWR mechanism with polynomial advice, incurring a 
negligible loss in the approximation factor.  The
proof of this step closely parallels Adleman's proof
that $\compclass{BPP} \sse \compclass{P}/\poly$.

\subsection{History of these results}
\label{sec:history}

Our work builds on the work of~\cite{MPSS09}, which obtained a weaker
version of Theorem~\ref{thm:bav}, also using the ``shattering''
technique.  Independently and concurrently with our discovery of
Theorem~\ref{thm:bav}, a different proof of a similar result (limited
to deterministic mechanisms, but obtaining optimal dependence on the
number of items as well as players) was discovered by Buchfuhrer and
Umans~\cite{BU09}.  Our Lemma~\ref{lem:shatter2}, which constitutes a
step in the proof of Theorem~\ref{thm:main} and was discovered after
we had read the proof of the Buchfuhrer-Umans result, uses a counting
argument similar to their proof of a seemingly unrelated shattering
lemma in~\cite{BU09}.



\section{Preliminaries}
\label{sec:preliminaries}

We assume the reader is familiar with standard terminology and
notation regarding truthful mechanisms and approximation algorithms.
Appendix~\ref{app:prelim} contains the relevant definitions.

\subsection{Combinatorial Auctions}
In a combinatorial auction there is a set $[m]=\set{1,2,\ldots,m}$ of
items, and a set $[k]=\set{1,2,\ldots,k}$ of players. Each player $i$
has a valuation function $v_i:2^{[m]}\rightarrow \RRp$ that is normalized
($v_i(\emptyset) = 0$) and monotone ($v_i(A) \leq v_i(B)$ whenever $A
\sse B$).  

An allocation of items $M$ to the players $N$ is a function $S: M \to
N \union \set{*}$. Notice that we do not require all items to be
allocated. If an allocation $S$ allocates all items -- i.e. $S$ maps $M$
into $N$ -- we say $S$ is a \emph{total allocation}. The allocation that
allocates no items is called the \emph{empty allocation}. For
convenience, we use $S(j)$ to denote the player receiving item $j$,
and we use $S_i$ to denote the items allocated to player $i$. We use
$\X(M,N)$ to denote the set of all alocations of $M$ to $N$.

In combinatorial auctions, the feasible solutions are the allocations
$\X([m],[k])$ of the items to the players. The \emph{social welfare}
of such an allocation $S$ is defined as $\sum_i v_i(S_i)$. When the
players have values $\set{v_i}_i$, we often use $v(S)$ as
shorthand for the welfare of $S$. The goal in combinatorial auctions
is to find an allocation that maximizes the social welfare.

\subsection{Valuation Classes}\label{subsec:vals}
The hardness of designing truthful combinatorial auction mechanisms
depends on the allowable player valuations. Recall that a
\emph{valuation} over $M$ is a function $v: 2^M \to \RRp$. We let
$\mathcal V$ denote the set of all valuations over all abstract finite sets
$M$. A \emph{valuation class} $\C$ is a subset of $\mathcal
V$. Examples of valuation classes include submodular valuations,
subadditive valuations, single-minded valuations, etc.  Our first
result applies to any valuation class that satisfies some natural properties. 

\begin{definition}\label{def:regular}
  We a say a valuation class $\C$ is \emph{regular} if the following hold
\begin{enumerate} 
\item Every valuation in $\C$ is monotone and normalized.
\item The canonical valuation on any singleton set is in $\C$. Namely,
  for any item $a$ the valuation $v:2^{\set{a}} \to \RRp$, defined as
  $v(\set{a}) = 1$ and $v(\emptyset)=0$, is in $\C$.
\item Closed under scaling: Let $v:2^M \to \RRp$ be in $\C$, and let $c\geq
  0$. The valuation $v': 2^M \to \RRp$, defined as $v'(A) = c \cdot v(A)$ for all $A
  \sse M$, is also in $\C$.
\item Closed under disjoint union: Let $M_1$ and $M_2$ be disjoint
  sets. Let the valuations $v_1: 2^{M_1} \to \RRp$ and $v_2:2^{M_2} \to
  \RRp$ be in $\C$. Their disjoint union $v_3 = v_1 \oplus v_2 : 2^{M_1
    \union M_2} \to \RRp$,
  defined as $v_3(A) = v_1(A \intersect M_1) + v_2( A \intersect M_2)$
  for all $A \sse M_1 \union M_2$, is in $\C$.
\item Closed under relabeling: Let $M_1,M_2$ be sets with a
  bijection $f : M_2 \rightarrow M_1.$  If $v_1 : 2^{M_1} \to \RRp$
  is in $\C$, then the valuation $v_2 : 2^{M_2} \to \RRp$ defined
  by $v_2(S) = v_1(f(S))$ is also in $\C$.
\end{enumerate}
\end{definition}
Note that all regular valuation classes support \emph{zero-extension}.
More formally, let $M \sse M'$, and let $v: 2^M \to \RRp$ be in
$\C$. The extension of $v$ to $M'$, defined as $v'(A) = v(A \intersect
M)$ for all $A \sse M'$, is also in $\C$. In the context of
combinatorial auctions, we use $\C_m$ to denote the subset of
valuation class $\C$ that applies to items $[m]$.

Most well-studied valuation classes for which the underlying
optimization problem is APX-hard are regular. This includes
submodular, subadditive, coverage, and weighted-sum-of-matroid-rank
valuations. However, two interesting counter-examples come to mind:
multi-unit (where items are indistinguishable), and single-minded
valuations. Nevertheless, the underlying optimization problem is not
APX hard for multi-unit valuations, and for single-minded valuations
the computational hardness of approximation is $1/k^{1-\epsilon}$ even
without the extra constraint of truthfulness
(see \cite{BN07}).

Our second hardness result pertains to
deterministic mechanisms for a very simple, non-regular class:
budgeted additive valuations. This is despite the fact that the
underlying $k$-player optimization problem admits an FPTAS
\cite{AndelmanM04}. Budgeted additive valuations are defined as follows.

\begin{definition}
  We say a valuation $v: 2^M \to \RRp$ is \emph{budgeted additive} if
  there exists a constant $B\geq 0$ (the budget) such that $v(A) =
  \min(B,\sum_{i \in A} v(\set{i}))$.
\end{definition}

\subsection{MIR, Randomized MIR, MIDR, and MIWR}

Maximal in range (MIR) algorithms were introduced in \cite{NR07} as a
paradigm for designing truthful approximation mechanisms for
computationally hard problems. An algorithm $\A$ is maximal-in-range
if it induces a maximal-in-range allocation rule when $k$ and $m$ are fixed.

\begin{definition}
  A $k$-bidder, $m$-item allocation rule $f$ is
  \emph{maximal-in-range} (MIR) if there exists a set of allocations
  $\R \sse \X([m],[k])$, such that $\forall v_1, \ldots, v_k \;
  f(v_1,...,v_k) \in \arg\max_{S \in \R} \Sigma_i v_i(S_i)$.
\end{definition}

A generalization of maximal-in-range that uses randomization sometimes
yields better algorithm. An algorithm $\A$ is \emph{randomized
  maximal-in-range} if it induces a maximal-in-range allocation rule
for every realization of its random coins. It is well known a
randomized MIR algorithm can be combined with the VCG payment scheme
to yield universally truthful mechanisms.

Dobzinski and Dughmi defined a generalization of randomized maximal-in-range
algorithms in \cite{DD09}, termed maximal-in-distributional-range
(MIDR). Here, each element of the range is a distribution over
allocations. The resulting mechanism outputs the distribution in the
range that maximizes the expected welfare, and charges VCG payments.

\begin{definition}
  $f$ is \emph{maximal-in-distributional-range} (MIDR) if there exists
  a set $\D$ of distributions over allocations such that for all
  $v_1, \ldots, v_k$, $f(v_1,...,v_k)$ is a distribution $D \in \D$
  that maximizes the expected welfare of a random sample from $D$.
\end{definition}

MIDR algorithms were used in \cite{DD09} to obtain a polynomial-time
truthful-in-expectation $\mbox{FPTAS}$ for multi-unit auctions,
despite a lower bound of 2 on polynomial-time maximal-in-range
algorithms. Moreover, they exhibited a variant of multi-unit auctions
for which an MIDR FPTAS exists, yet no deterministic (or even
universally truthful) polynomial time mechanism can attain an
approximation ratio better than 2. Notably, the MIDR algorithms
presented in \cite{DD09} are of the following special form.

\begin{definition}
  $f$ is \emph{maximal in weighted range} (MIWR) if $f$ is MIDR, and
  moreover each distribution $D$ in the range of $f$ is a
  \emph{weighted allocation}: There is a pure allocation $S \in \X([m],[k])$
  such that $D$ outputs $S$ with some probability, and the empty
  allocation otherwise.
\end{definition}

We denote a weighted allocation that outputs $S$ with probabiliby $w$
by the pair $(w,S)$. When there is room for confusion, we use
the term \emph{pure allocation} to refer to an unweighted allocation.

Our first result will apply to all polynomial time MIWR mechanisms,
and is the first such negative result. In fact, this result also
applies to any randomization over MIWR mechanisms, a class we term
\emph{randomized MIWR} mechanisms. Randomized MIWR mechanisms
include all universally-truthful randomized MIR mechanisms as a special
case. Our second result will apply to only randomized maximal-in-range
mechanisms, yet applies to a very restricted class of valuations,
namely budgeted-additive valuations.

\subsection{Some Complexity Theory}
\label{sec:complexity}

\newcommand{\circuit}{{\mathcal{C}}}

Broadly speaking, our proof involves constructing a reduction that
transforms every instance of a $k$-player mechanism design problem into 
an instance of one of $k$ other problems 
$\mathcal{P}_1,\ldots,\mathcal{P}_k$, each of which individually
is presumed to be computationally hard.  The reduction has the
property that input instances with a given number of items, $m$,
are all transformed into inputs of the same problem $\mathcal{P}_i$,
but instances with a different number of items may map to a 
different one of the $k$ problems.  This raises difficulties 
because the complexity of $\mathcal{P}_1,\ldots,\mathcal{P}_k$
may be ``wild'': for each of them, there may be some input
sizes (perhaps even infinitely many) that can be solved by
a polynomial-sized Boolean circuit.  In this section we 
develop the relevant complexity-theoretic tools to surmount
this obstacle. We relegate the proofs of these results to Appendix \ref{app:complexity}.

\begin{definition}
A set $S \subseteq \NN$ is said to be \emph{complexity-defying (CD)}
if there exists a family of polynomial-sized Boolean circuits
$\{\circuit_n\}_{n \in \NN}$ such that for all $n \in S$, 
the circuit $\circuit_n$ correctly decides {\sc 3sat} on all
instances of size $n$.

A set $T \subseteq \NN$ is said to be 
\emph{polynomially complexity-defying (PCD)} if there exists
a complexity-defying set $S$ and a polynomial function $p(n)$
such that $T \subseteq \bigcup_{n \in S} [n,p(n)].$
Here $[a,b]$ denotes the set of all natural numbers
$x$ such that $a \leq x \leq b$.  If a set $U \subseteq \NN$
is not PCD, we say it is \emph{non-PCD}.
\end{definition}

\begin{lemma}  A finite union of CD sets is CD, and a finite union
of PCD sets is PCD.
\label{lem:finite-union}
\end{lemma}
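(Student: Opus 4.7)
The plan is to handle the two statements separately, doing the CD case first and then bootstrapping to PCD. By induction, it suffices to prove the binary union case in each.

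For the CD case, suppose $S_1, S_2 \subseteq \NN$ are CD, witnessed by polynomial-sized circuit families $\{\circuit^1_n\}$ and $\{\circuit^2_n\}$ of sizes bounded by polynomials $q_1, q_2$ respectively. I would exploit the fact that ``CD'' is a non-uniform notion: we only need, for each input length $n \in S_1 \cup S_2$, \emph{some} polynomial-sized circuit that decides {\sc 3sat} on inputs of that length. So define a new family by setting $\circuit_n := \circuit^1_n$ if $n \in S_1$, and $\circuit_n := \circuit^2_n$ otherwise (the choice for $n \notin S_1 \cup S_2$ is irrelevant). The size of $\circuit_n$ is bounded by $\max(q_1(n), q_2(n))$, which is again a polynomial, and correctness on $S_1 \cup S_2$ is immediate from the case split. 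I don't expect any genuine obstacle here since non-uniformity does all the work.

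For the PCD case, I would unpack the definition: if $T_1$ and $T_2$ are PCD, then there exist CD sets $S_1, S_2$ and polynomials $p_1, p_2$ with $T_i \subseteq \bigcup_{n \in S_i} [n, p_i(n)]$ for $i = 1, 2$. Let $S := S_1 \cup S_2$, which is CD by the first part, and let $p(n) := \max(p_1(n), p_2(n))$, which is a polynomial. Then
\[
T_1 \cup T_2 \;\subseteq\; \bigcup_{n \in S_1} [n, p_1(n)] \,\cup\, \bigcup_{n \in S_2} [n, p_2(n)] \;\subseteq\; \bigcup_{n \in S} [n, p(n)],
\]
which exhibits $T_1 \cup T_2$ as PCD with witnesses $(S, p)$. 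The induction on the number of sets in the union is routine in both cases.

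The ``hard part'' is really just noticing that non-uniformity trivializes the CD union: one might worry that the combined family needs to somehow \emph{decide} at runtime which subfamily to invoke, but since we are allowed a different circuit for every input length, we simply hard-wire the correct choice into the family. Once that is observed, the PCD case follows by manipulating the two quantifiers (CD set and polynomial) in the definition independently.
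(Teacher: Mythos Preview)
Your proof is correct and matches the paper's approach essentially verbatim: the paper also defines the combined circuit family by selecting, for each $n$, the circuit from whichever witnessing family applies (it handles all $k$ sets at once rather than reducing to the binary case, but this is cosmetic), and for PCD it likewise takes $S = \bigcup_i S_i$ and $p(n) = \max_i p_i(n)$.
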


\begin{definition}
A decision problem or promise problem is said to have the 
\emph{padding property} if for all $n < m$ 
there is a reduction that transforms instances
of size $n$ to instances of size $m$, running
in time $\poly(m)$ and mapping ``yes'' instances
to ``yes'' instances and ``no'' instances to ``no''
instances.  Similarly, an optimization problem is 
said to have the padding property if for all
$n < m$ there is a reduction that transforms instances
of size $n$ to instances of size $m$, running
in time $\poly(m)$ and preserving the optimum value of
the objective function.
\end{definition}

\begin{lemma} \label{lem:pcd}
Suppose that $\mathcal{L}$ is a decision problem 
or promise problem that has the padding property
and is NP-hard under polynomial-time
many-one reductions.  Let $T$ be any subset of $\NN$.  
If there is a polynomial-sized circuit family that decides
$\mathcal{L}$ correctly whenever the input size belongs
to $T$, then $T$ is PCD.
\end{lemma}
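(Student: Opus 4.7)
\medskip
\noindent\emph{Proof plan for Lemma~\ref{lem:pcd}.}
The plan is to use the NP-hardness reduction from \textsc{3sat} to $\mathcal{L}$ together with the padding property, so that each hypothesized circuit $\circuit_m$ (for $m \in T$) can be repurposed into a small circuit for \textsc{3sat} on a specific input size $n_m$ that satisfies $n_m \leq m \leq p(n_m)$ for a fixed polynomial $p$. First, I would fix a polynomial-time many-one reduction $\rho$ from \textsc{3sat} to $\mathcal{L}$, along with a polynomial $q$ (without loss of generality strictly increasing with $q(n) \geq n$) satisfying $|\rho(\phi)| \leq q(|\phi|)$. For each $m \in T$ with $m \geq q(1)$, define
\[
n_m \;=\; \max\{\,n \in \NN : q(n) \leq m\,\},
\]
so that $n_m \leq q(n_m) \leq m < q(n_m+1)$. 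Setting $p(n) := q(n+1)$ yields $n_m \leq m \leq p(n_m)$.

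Next, let $S = \{\,n_m : m \in T,\, m \geq q(1)\,\}$. To verify that $S$ is CD, for each $n \in S$ I would pick some witness $m = m(n) \in T$ with $n_m = n$ and build a circuit $D_n$ for \textsc{3sat} on $n$-bit inputs as the following three-stage composition: apply a circuit realization of $\rho$ to obtain an $\mathcal{L}$-instance of size $r \leq q(n) \leq m$; apply the padding reduction of $\mathcal{L}$ to inflate this instance to size exactly $m$; finally apply $\circuit_m$. The three stages have circuit size $\poly(n)$, $\poly(m)$, and $\poly(m)$ respectively, and because $m \leq p(n)$ each of these is $\poly(n)$, whence $|D_n| = \poly(n)$. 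For $n \notin S$, take $D_n$ to be any trivial constant circuit. This furnishes a polynomial-sized Boolean circuit family that correctly decides \textsc{3sat} whenever the input size lies in $S$, so $S$ is CD.

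For the PCD containment $T \subseteq \bigcup_{n \in S}[n, p(n)]$: every $m \in T$ with $m \geq q(1)$ satisfies $m \in [n_m, p(n_m)]$ with $n_m \in S$, so $T_{\mathrm{large}} := T \cap [q(1), \infty)$ is PCD. The remaining $T_{\mathrm{small}} := T \setminus T_{\mathrm{large}}$ is a finite set, hence trivially CD via hardcoded truth-table circuits, and therefore PCD. By Lemma~\ref{lem:finite-union}, $T = T_{\mathrm{small}} \cup T_{\mathrm{large}}$ is PCD.

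The main obstacle is purely bookkeeping: one must bound the \emph{circuit size} of each composed stage---not merely the running time of the corresponding algorithm---and then exploit $m \leq p(n_m)$ to convert the $\poly(m)$ bound on $\circuit_m$ into the required $\poly(n)$ bound. Defining $n_m$ as the largest $n$ with $q(n) \leq m$ is precisely what guarantees this polynomial tie between $m$ and $n_m$, and is the key quantitative ingredient.
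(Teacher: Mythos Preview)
Your proposal is correct and follows essentially the same approach as the paper's proof: use the many-one reduction from \textsc{3sat} to $\mathcal{L}$ composed with padding to turn the hypothesized circuits $\circuit_m$ (for $m \in T$) into \textsc{3sat} circuits on a CD set $S$, with each $m \in T$ lying in $[n, p(n)]$ for some $n \in S$. The only differences are cosmetic bookkeeping---the paper defines $S$ as those $n$ for which $(p(n), p(n+1)]$ meets $T$ rather than via your $n_m := \max\{n : q(n) \le m\}$, and it does not separately treat a finite $T_{\mathrm{small}}$---but the argument is the same.
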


\begin{lemma} \label{lem:not-pcd}
If $\NN$ is PCD, then $\compclass{NP} \subseteq \compclass{P}/\poly.$
\end{lemma}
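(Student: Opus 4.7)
My plan is to show $\compclass{NP}\sse\compclass{P}/\poly$ by exhibiting, for every input size $N$, a polynomial-size circuit that decides \textsc{3sat} on instances of size $N$. Since $\NN$ is PCD, we have $\NN \sse \bigcup_{n \in S}[n,p(n)]$ for some CD set $S$ witnessed by a family $\{\circuit_n\}_{n\in\NN}$ of circuits of polynomially bounded size (with $\circuit_n$ deciding \textsc{3sat} on size-$n$ instances whenever $n\in S$) and some polynomial $p$. Without loss of generality I would take $p$ to be monotonically nondecreasing (replace $p(n)$ by $\max_{i\le n} p(i)$ if needed).

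The central observation I need is the following ``upward'' consequence of the covering hypothesis: for every $N\in\NN$ there exists $n\in S$ with $N\le n\le p(N)$. To see this, consider the natural number $p(N)$: by the PCD covering, $p(N)\in[n,p(n)]$ for some $n\in S$, which immediately gives $n\le p(N)$, and also $p(n)\ge p(N)$, which by the monotonicity of $p$ forces $n\ge N$. So indeed some witness $n\in S\cap[N,p(N)]$ exists.

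With this in hand, the circuit for size $N$ is built by composition. I would use the (standard) padding property of \textsc{3sat}: one can transform a size-$N$ instance into an equivalent size-$n$ instance for any $n>N$ in time $\poly(n)$ (e.g.\ by appending dummy variables and tautological clauses of the form $(x\vee x\vee x)$ and/or widening variable indices). Instantiate this reduction on the chosen $n\in S\cap[N,p(N)]$, hard-code it into a circuit, and then feed its output into $\circuit_n$. The overall size is $\poly(n)+\poly(n)\le\poly(p(N))=\poly(N)$, and correctness is immediate because padding preserves satisfiability and $\circuit_n$ is correct on size-$n$ instances (as $n\in S$). Doing this for every $N$ gives a polynomial-size circuit family for \textsc{3sat}, hence $\compclass{NP}\sse\compclass{P}/\poly$.

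There is really no deep obstacle here; the only subtle point is the direction of padding. The CD hypothesis gives circuits for \emph{smaller} sizes $n$ relative to $N$ (since $n\le p(N)$ need not exceed $N$ from the raw PCD definition), which would be useless for solving size-$N$ instances. The ``trick'' that resolves this is the monotonicity argument above, which extracts from the covering of $p(N)$ an $n\in S$ that is in fact $\ge N$, so that padding (which only goes upward) matches what we have available.
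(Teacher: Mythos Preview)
Your argument is essentially the paper's: both apply the PCD covering to $p(N)$ (rather than to $N$ itself), extract some $n\in S$ with $N\le n\le p(N)$, pad the size-$N$ \textsc{3sat} instance up to size $n$, and invoke $\circuit_n$. One small correction: from $p(n)\ge p(N)$ and $p$ merely \emph{nondecreasing} you cannot deduce $n\ge N$ (take $p$ constant on an interval); you need $p$ strictly increasing, which is still WLOG---replace $p(n)$ by $p(n)+n$ rather than by $\max_{i\le n}p(i)$. The paper assumes exactly this (``$p(n)$ is an increasing function of $n$'') and otherwise proceeds identically.
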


\subsection{Technical Assumptions For Main Result}
\label{subsec:assumptions}
For our main result, a note is in order on the representation of
valuation. Our results hold in the computational model. Therefore, we
may assume that valuation functions are \emph{succint}, in that they
are given as part of the input, and can be evaluated in time
polynomial in the length of their description. Naturally, our main
result applies to non-succint valuations with oracle access, when the
resulting problem admits a suitable reduction from an APX hard
optimization problem.

Moreoever, due to the generality of our main result, we need to make
some technical assumptions. Namely, we restrict our attention to
Combinatorial Auctions over a well-behaved family of instances. This
restriction is without loss of generality for all well-studied classes
of valuations for which the problem is APX-hard, such as coverage,
submodular, etc. A family $I$ of inputs to Combinatorial auctions is
\emph{well-behaved} if there exists a polynomial $b(m)$ such that for
each input $(k,m, v_1,\ldots,v_k) \in I$, the function $v_i$ is
represented as a bit-string of length $O(b(m))$, and moreover always
evaluates to a rational number with $O(b(m))$ bits. While we believe
this assumption may be removed, we justify it on two grounds: First,
every well-studied variant of combinatorial auctions that is APX hard
is also APX hard on a well-behaved family of instances, so this
restriction is without loss for all such variants. Second, this assumption greatly
simplifies our proof, since it allows us to describe the size of an
instance by a single parameter, namely $m$.






\section{Amplified Hardness for APX-Hard Valuations}
\label{sec:midr_apx}

In this section, we prove the following main result.

\begin{theorem}\label{thm:main}
  Fix a regular valuation class $\C$ 
 for which $2$-player social welfare maximization is
  $APX$-hard. Fix a constant $k\geq 1$. For any constant $\epsilon >
  0$, no polynomial-time randomized MIWR algorithm for $k$-player
  combinatorial auctions achieves an expected approximation ratio of
  $1/k + \epsilon$, unless $\unhardness$.
\end{theorem}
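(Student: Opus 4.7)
The plan is to prove the contrapositive: assume a polynomial-time randomized MIWR algorithm $\A$ achieves expected approximation ratio $1/k+\epsilon$ for $k$-player CAs over $\C$, and derive $\unhardness$. The argument has three stages, following the outline in Section~\ref{sec:techniques}.

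Stage one is Adleman-style derandomization. Running $\A$ on $N=\poly(m)$ independent random strings and applying a union bound over a sufficiently rich finite family of hard instances produces a polynomial-size advice string that preserves approximation up to an $\epsilon/2$ loss. From here on I would work with deterministic MIWR mechanisms equipped with polynomial advice; the cost of passing from ``polynomial time'' to ``polynomial-size circuits'' is exactly what the $\compclass{P}/\poly$ conclusion absorbs.

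Stage two is shattering. For each $m$, let $\mathcal{R}_{m,k}$ denote the range of weighted allocations of the advice-equipped mechanism. I would feed in hard instances built via regularity of $\C$: scaled canonical singleton valuations combined by disjoint union and relabeling, with items labeled by a uniformly random total allocation $f:[m]\to[k]$. On such an instance the optimum welfare equals $m$, while a weighted allocation $(w,S)$ achieves expected welfare $w\cdot|\{i:S(i)=f(i)\}|$, so the $(1/k+\epsilon)$-approximation forces the chosen pure part to agree with $f$ on at least $(1/k+\epsilon)m$ coordinates regardless of the weight $w$. This matches the $q=2$ covering-radius hypothesis of Lemma~\ref{lem:shatter-covering}, which then produces a subset $S\subseteq[m]$ of size $\delta m$ and a pair $T=\{a,b\}\subseteq[k]$ such that every function $S\to T$ appears as the pure part of some element of $\mathcal{R}_{m,k}$.

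Stage three is the self-amplification, framed around $\alpha_k^*:=\sup\{\alpha : \text{some polynomial-advice MIWR achieves approximation ratio } \alpha \text{ on } k\text{-player CAs over } \C\}$. For any $|T|$-player CA instance on item set $S$ with valuations in $\C$, I would zero-extend to $[m]$ and plant aggressively scaled canonical valuations on the remaining players $[k]\setminus T$, invoking the closure axioms for scaling, disjoint union, relabeling, and canonical singletons. The aggressive scaling simultaneously forces the welfare-maximizer of $\A$ into the shattered rectangle $T^S$ and pins its weight arbitrarily close to $1$, so $\A$ solves the $|T|$-player problem on $S$ up to a vanishing weight slack, exhibiting an $(\alpha_{|T|}^*+\delta)$-approximate MIWR on $\delta m$ items for some $\delta>0$. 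This contradicts the definition of $\alpha_{|T|}^*$; a delicate induction on $k$ propagates this through all intermediate player counts down to the base case $|T|=2$, where the assumed APX-hardness of $2$-player welfare maximization over $\C$ pins $\alpha_2^*\leq 1-\eta$ for some absolute $\eta>0$, and the iterated amplification forces $\alpha_2^*$ above that threshold. The resulting polynomial-size circuits approximate an NP-hard problem better than its APX-threshold on a set of input sizes that, by Lemmas~\ref{lem:finite-union} and~\ref{lem:not-pcd}, must be non-PCD, delivering $\unhardness$.

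The main obstacle I expect is Stage three: cleanly handling the per-range-element weights $w\in[0,1]$ during the self-amplification. The crucial structural feature is that each MIWR range element is parameterized by a single pure allocation and a single weight, so aggressive scaling of the planted valuations pushes every competitive weight close to $1$, allowing an essentially MIR-style analysis on the shattered rectangle $T^S$. A secondary difficulty is the complexity bookkeeping: each amplification step contracts the item count $m\mapsto \delta m$, so the derived circuits only solve the hard problem on a sparse set of input sizes, and the non-PCD machinery of Section~\ref{sec:complexity} is what lifts this to a $\compclass{P}/\poly$ algorithm for \textsc{3sat} on all sufficiently large inputs.
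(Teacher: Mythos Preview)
Your Stage~2 is correct but Stage~3 contains a genuine gap, and it is precisely the one the paper has to work hardest to close.

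The claim that ``aggressive scaling \ldots\ pins its weight arbitrarily close to $1$'' is false. In an MIWR mechanism the weight $w(S)$ attached to a pure allocation $S$ is a fixed property of the range $\D^m$; it does not depend on the input valuations. Scaling valuations of players in $[k]\setminus T$ can influence \emph{which} range element $\A$ selects, but it cannot change the weights of the range elements themselves. Your Stage~2 shattering (with $q=2$) guarantees that every function $S\to\{a,b\}$ is the restriction of some pure allocation in $\R^m$, but all you know about the corresponding weights is the trivial lower bound $w\geq\alpha=1/k+\epsilon$. So when you embed a $2$-player instance on $S$ and run $\A$, the resulting MIWR mechanism for two players is guaranteed only a $(1/k+\epsilon)$-approximation, which does not exceed $\alpha_2^*\geq 1/2$ and yields no contradiction. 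Your own text hints at an ``induction through intermediate player counts,'' but since your Stage~2 always produces $|T|=2$, there are no intermediate counts to induct over.

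The paper's remedy is to partition the range by weight \emph{before} shattering. It defines weight classes $\R^m_q$ for $2\leq q\leq k+1$, with $\R^m_q$ containing allocations of weight roughly $1/q$. If many random perfect valuations are well-approximated by allocations in $\R^m_q$ for some $q\leq k$, then the \emph{pure parts} must be $(q-1)/k$-close in Hamming distance, and Lemma~\ref{lem:shatter-covering} yields a $q$-player shattered set; since the shattered allocations come from $\R^m_q$, their weights are bounded below by roughly $1/q$, so the embedded $q$-player mechanism beats $1/q$, contradicting the inductive hypothesis (or the supremum $\alpha$ when $q=k$). The decisive remaining case is $q=k+1$: here all useful weights lie in the narrow band $[\alpha,\alpha/(1-\delta)]$, so the mechanism behaves like a near-MIR mechanism, and the paper leverages this (via a hybrid of the original instance with a random perfect valuation) to extract a $(1-\eta)$-approximation for the $2$-player problem, contradicting APX-hardness. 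This weight-class partition, and especially the ``nearly uniform weights'' case, is the missing idea in your proposal.
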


It is worth noting that this impossibility result applies to all
universally-truthful randomized maximal-in-range algorithms. First, we
prove the analogous result for MIWR mechanisms that take polynomial
advice.

\begin{theorem}\label{thm:main_poly}
  Fix a regular valuation class $\C$ 
for which $2$-player social welfare maximization is
  $APX$-hard. Fix a constant $k\geq 1$. For any constant $\epsilon >
  0$, no non-uniform polynomial-time MIWR algorithm for $k$-player
  combinatorial auctions achieves an expected approximation ratio of
  $1/k + \epsilon$, unless $\unhardness$.
\end{theorem}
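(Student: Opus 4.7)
The plan is to derive $\unhardness$ from a hypothetical non-uniform polynomial-time MIWR mechanism $M$ for $k$-player combinatorial auctions over $\C$ that achieves approximation ratio $1/k + \eps$. The reduction follows the shattering-and-planting paradigm of Papadimitriou-Schapira-Singer: show that $M$'s range $R_m$ at each input size $m$ must shatter a large subset of items, plant a hard 2-player instance on the shattered set, and then use the complexity-theoretic machinery of Section~\ref{sec:complexity} to translate the resulting non-uniform 2-player approximation procedure into a PCD subset of $\NN$.

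I would first verify the covering-radius hypothesis of Lemma~\ref{lem:shatter-covering} for $R_m$ with $q = 2$. For a uniformly random $f : [m] \to [k]$, construct random additive valuations $v_i(A) = |\{j \in A : f(j) = i\}|$; by the singleton, scaling, disjoint-union, and zero-extension axioms of a regular class, each $v_i$ lies in $\C$. The optimum social welfare on this instance is $m$, so $M$'s output $(w, g) \in R_m$ satisfies $w \cdot |\{j : g(j) = f(j)\}| \geq (1/k + \eps)m$; since $w \leq 1$, $g$ agrees with $f$ on at least $(1/k + \eps) m$ indices, i.e., differs on at most $(1 - 1/k - \eps) m$. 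Lemma~\ref{lem:shatter-covering} then yields a subset $S \sse [m]$ of size at least $\delta m$ and a pair of players $\{a, b\} \sse [k]$ such that every function $S \to \{a, b\}$ is the restriction of some $g \in R_m$.

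Next I would plant a 2-player combinatorial auction instance $(\tilde v_a, \tilde v_b)$ with $\tilde v_a, \tilde v_b \in \C_{|S|}$ on the shattered subset: relabel, zero-extend to $[m]$, and assign the zero valuation (obtained by scaling a singleton valuation by $0$ and taking disjoint union) to the remaining $k-2$ players. Regularity ensures the resulting $k$-tuple is a legal input to $M$. Running $M$ on the planted input and using the shattering of $\{a,b\}$ over $S$ yields a non-uniform polynomial-time MIWR approximation mechanism for 2-player CA on $|S| \geq \delta m$ items. The delicate induction over $k$ alluded to in Section~\ref{sec:techniques} now enters: directly transferring the $1/k + \eps$ ratio is too weak to contradict the APX-hardness of 2-player CA. Instead I would let $\alpha^*$ be the supremum of approximation ratios achievable by non-uniform polynomial-time MIWR mechanisms for $k$-player CA over $\C$ and show that the shattering-and-planting reduction converts an $(\alpha^* - \delta')$-approximate MIWR on $m$ items into an $(\alpha^* + \delta')$-approximate MIWR on $\delta m$ items. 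This contradicts the definition of $\alpha^*$ unless $\alpha^* \leq 1/k$; the induction on $k$ handles the interaction between the size of the shattered player set, the depth of the reduction chain, and how the MIWR weights in the shattered range translate into approximation ratios on the planted subinstance.

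The final step uses the complexity-theoretic lemmas of Section~\ref{sec:complexity}. The non-uniform advice to $M$ at size $m$ together with the shattering data $(S, \{a,b\})$ extracted above constitutes polynomial advice that, on $\delta m$-item inputs, approximates 2-player CA over $\C$ beyond the APX-hardness threshold after the amplification. The padding property of 2-player CA is satisfied by appending zero-valued dummy items (via scaling of singleton valuations and disjoint union in $\C$), so Lemma~\ref{lem:pcd} forces the set of input sizes where this circuit family succeeds to be PCD; Lemma~\ref{lem:finite-union} aggregates across the finitely many reduction cases to cover $\NN$, and Lemma~\ref{lem:not-pcd} yields $\unhardness$. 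The main obstacle is the amplification step in the preceding paragraph: MIWR's weighted flexibility (unlike deterministic MIR) precludes a direct reduction to exact optimization on the shattered subset, and the induction on $k$ must be threaded carefully to avoid circularity between the amplification and the inductive hypothesis.
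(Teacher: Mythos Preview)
Your covering-radius verification for $q=2$ is correct, and you have the right high-level skeleton (shattering, planting, supremum argument, PCD machinery). But the concrete reduction you describe does not deliver the amplification you claim, and the paper's proof fills exactly the gap you leave open.

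The problem is this: applying Lemma~\ref{lem:shatter-covering} with $q=2$ to the \emph{entire} range $R_m$ shatters a two-player subset $S$, but the allocations in $R_m$ that witness the shattering may carry MIWR weight as low as $\alpha \approx 1/k + \eps$. When you plant a two-player instance on $S$ and run $M$, the optimal pure allocation is in the range but only with weight $\geq \alpha$, so $M$'s output is guaranteed to achieve only $\alpha \cdot OPT$ in expectation --- which neither contradicts two-player APX-hardness nor gives the ``$(\alpha^*-\delta') \to (\alpha^*+\delta')$'' amplification you assert. Your proposal acknowledges this as ``the main obstacle'' but does not resolve it.

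The paper's proof resolves it by first \emph{partitioning the range by weight}. One splits $\R^m$ into weight classes $\R^m_q$, $2 \leq q \leq k+1$, where allocations in $\R^m_q$ (for $q \leq k$) have weight roughly in $[1/q, 1/(q-1))$, and $\R^m_{k+1}$ collects the allocations with weight in $[\alpha, \alpha/(1-\delta))$. On a random perfect valuation profile one of these classes must supply the good approximation at least $1/k$ of the time; call that index $q$. Now the covering-radius hypothesis of Lemma~\ref{lem:shatter-covering} is applied to $\R^m_q$ \emph{only}, and because the weight is $\gtrsim 1/q$ the pure allocation must agree with the random profile on roughly $(q-1)/k$ fraction of items, so the lemma shatters a $q$-player (not merely two-player) set $W$. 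Planting a $q$-player instance on $W$ then yields a strictly-better-than-$1/q$ MIWR mechanism --- contradicting the inductive hypothesis \ih{q} when $q<k$, and contradicting the choice of $\alpha$ as a supremum when $q=k$. The case $q=k+1$ is handled by a separate ``hybrid valuation'' construction: since all allocations in this class have weight within a $(1-\delta)$ factor of each other, the mechanism restricted to them behaves essentially like MIR, and one mixes the target two-player instance with a random perfect valuation on a disjoint block of items (in carefully chosen proportions) to force $M$'s output into $\R^m_{k+1}$ and extract a $(1-\eta)$-approximation, contradicting APX-hardness. Both the weight-class partition and the hybrid-valuation construction are missing from your plan, and without them the amplification step cannot be carried out.
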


We then complete the proof by showing that any randomized MIWR
mechanism can be ``de-randomized'' to one that takes polynomial
advice.

A word is in order on the notion of \emph{non-uniform}
computation. For this, the reader should refer to Appendix
\ref{sec:nonuniform}.   Our hardness results in this section
follow from the commonly-held conjecture that non-uniform computation
cannot solve NP-complete problems, in other words $\hardness$.


Our proof strategy is as follows. In Section \ref{subsec:perfect} we
define a ``perfect valuation profile'' on $k$ players as a set of
valuations where exactly one player is interested in each item. We
then show that any range of allocations that gives a good
approximation on a randomly drawn perfect valuation profile must
``shatter'' a constant fraction of the items, meaning that the range
contains all allocations of that subset of the items to $q$ of the
players, where the value of $q$ depends on the quality of the
approximation.  (A better approximation implies a larger $q$.)

In Section \ref{subsec:main_poly}, we prove Theorem
\ref{thm:main_poly} by induction on the number of players $k$. Roughly
speaking, we show that for any MIWR mechanism $\A$ for $k$ players,
the allocations with weight much larger than $1/k + \epsilon$ are
useless. Namely, the inductive hypothesis implies that the allocations
with weight sufficiently larger than $1/k + \epsilon$ cannot yield a
good approximation to a randomly drawn perfect valuation; otherwise,
one could use the resulting shattered set of items to design a
strictly better MIWR mechanism for $k'$ players for some $k' <
k$. This allows us to conclude that all ``useful'' allocations have very
similar weight to one another; within $1- \eta$ for arbitrarily small
$\eta$ and a sufficiently large set of items. Since the mechanism
maximizes over a large set of allocations that are almost ``pure'', in
the sense that the weights are almost identical, this yields a PTAS,
contradicting the APX hardness of the problem.

Finally, we complete the proof of Theorem \ref{thm:main} in Section
\ref{subsec:mainresult}, using a de-randomization argument. This step is
similar to Adleman's proof that $\compclass{BPP} \sse
\compclass{P/Poly}$.

\subsection{Perfect Valuations} \label{subsec:perfect}
We define a perfect valuation profile as one where each item is
desired by exactly one player. Perfect valuation profiles will prove
useful in our proof, due to the fact that no ``small'' range can well-approximate
social welfare for a randomly-drawn perfect valuation profile.

\begin{definition}
  Let $N$ and $M$ be a set of players and items, respectively. Let
  $v_i: 2^M \to \RRp$ be the valuation of player $i\in N$. We say the
  valuation profile $\set{v_i}_{i\in N}$ is a \emph{perfect valuation
    profile} on  $N$ and $M$ if there exists a total
  allocation $S$ of $M$ to $N$ such that $v_i(j) = 1$ if $j \in S_i$,
  and $v_i(j) = 0$ otherwise. In this case, we say that
  $\set{v_i}_{i\in N}$ is the perfect valuation profile
  \emph{generated} by $S$.
\end{definition}

To use perfect valuations in our proof, they must be allowable
valuations. Indeed, it follows immediately from definition
\ref{def:regular} that any regular class of valuations contains all
perfect valuations.


The key property of perfect valuations is a reinterpretation of Lemma
\ref{lem:shatter-covering}, and can be summarized as follows.  If a
range $\R$ of allocations achieves a ``good'' approximation for many
perfect valuations, then $\R$ must include all allocations of a
constant fraction of the items to some set of $q$ players. Here, the
number of players $q$ depends on the quality of the approximation
guaranteed by $\R$, with a better approximation yielding a larger
$q$. The precise dependence of $q$ on the quality of the
approximation, as stated in Lemma \ref{lem:shatter-covering}, will
prove key in Section \ref{subsec:main_poly}. 


\subsection{Hardness for Non-Uniform MIWR Mechanisms} \label{subsec:main_poly}

In this section, we prove Theorem
\ref{thm:main_poly}, assuming $\hardness$. We fix the valuation class
$\C$ as in the statement of the theorem. Moreover, we fix $\eta > 0$
such that the 2-player social welfare maximization problem
is APX-hard to approximate within $1-\eta$. The
proof proceeds by induction on $k$. We need the following strong
inductive hypothesis:


\newcommand{\ih}[1]{{\bf IH($\bf #1$)}}
\newtheorem*{ind}{\ih{k}}
\begin{ind}
  For any constant $\alpha > 1/k$ and set $T \subseteq \NN$, 
  if a non-uniform polynomial-time MIWR
  algorithm for the $k$-player problem achieves an
  $\alpha$-approximation for $m$ items whenever $m \in T$,
  then $T$ is PCD.  
\end{ind}

In other words, the set of input lengths for which any particular such
algorithm may achieve an $\alpha$-approximation is PCD.
(See Section~\ref{sec:complexity} for  
the definition of PCD.)
It is clear that establishing \ih{k} for all $k\geq 1$ proves Theorem
\ref{thm:main_poly}, since $\NN$ is not PCD. 
The base case of $k=1$ is trivial. We now fix
$k$, and assume \ih{q} for all $q < k$.

\newcommand{\FF}{\mathbb{F}}

Assume for a contradiction that \ih{k} is violated for some
$\alpha$. Let $\alpha > 1/k$ be the supremum over all
values of $\alpha$ violating it. Note that \ih{k-1} implies that
$\alpha \in \left(\frac{1}{k}, \frac{1}{k-1} \right]$. To simplify the
exposition, we assume the supremum is attained, and fix the algorithm
$\A$ (and corresponding family of polynomial advice strings)
achieving an $\alpha$-approximation for all $m \in \FF$ where
$\FF$ is not PCD.  Our arguments can all be
easily modified to hold when the supremum is not attained, by
instantiating $\A$ to achieve $(\alpha-\zeta)$ instead, where
$\zeta>0$ is as small as needed for the forthcoming proof. The proof
then proceeds as follows.  Letting $\D^m$ denote the range of $\A$
when the number of items is $m$, we partition $\D^m$ into $k$ sets
$\D^m_q \, (2 \leq q \leq k+1)$ according to the weight of the
allocation.  
We also assign every $m \in \FF$ to one or more subsets
$T_q \, (2 \leq q \leq k+1)$; the definition of $T_q$
is quite technical, but roughly speaking $m \in T_q$
if the output of $\A$, when applied to a random perfect
valuation profile, has probability at least $1/k$ of being
in $\D^m_q$.  As we said, $\FF = \bigcup_{q=2}^{k+1} T_q.$
However, we will prove that $T_q$ is PCD
for all $q$, hence by Lemma~\ref{lem:finite-union} their
union $\FF$ is PCD.  This contradicts our earlier assumption
that $\FF$ is not PCD and completes the proof.

To prove that $T_q$ is PCD, we distinguish three cases depending on
the value of $q$.  If $2 \leq q \leq k-1,$ then we prove that $m \in
T_q$ implies that there is a non-uniform polynomial-time MIWR
mechanism for the $q$-player problem that achieves an approximation
ratio strictly better than $1/q$ when the number of items is
$\ceiling{\sigma m},$ for some constant $\sigma>0$.  By our induction
hypothesis, the set of all such $\ceiling{\sigma m}$ is a PCD set.  If
$q=k$, then we proceed similarly but working with the $k$-player
problem and proving an approximation ratio strictly better than
$\alpha$ when the number of items is $\ceiling{\sigma m}$; once again
this implies that the set of all such $\ceiling{\sigma m}$ is a PCD
set, by our hypothesis on $\alpha$.  Finally, if $q=k+1$ then we prove
that there is a non-uniform polynomial-time algorithm achieving
approximation ratio $1-\eta$ for the two-player social welfare
maximization problem, where $\eta$ was chosen so that the problem APX
hard to approximate within $1-\eta$.  By Lemma~\ref{lem:pcd}, this
implies $T_q$ is PCD.

\paragraph{Defining the partition of the range.}
Recall that an MIWR mechanism fixes a range of weighted allocations
for each $m$. Let $\D^m$ denote the range of $\A$ when the number of
items is $m$. Let $\R^m=\set{S \in \X([m],[k]) : (w,S) \in \D^m \mbox{
    for some $w$}} $ be the corresponding set of pure allocations. For
each allocation $S \in \R^m$, we use $w(S)$ to denote the weight of
$S$ in $D^m$. We assume without loss of generality that there is a
unique choice of $w(S)$, since allocations with greater weight are
always preferred. When $m\in \FF$, we may assume without loss of
generality that $w(S) \geq \alpha$ for every $S \in \R^m$, since $\A$
achieves an $\alpha$ approximation for $m$. We fix $\epsilon>0$ such
that $\alpha > 1/k+\epsilon,$ and $\xi>0$ such that $1/k +
\epsilon/2 = (1-\xi)^{-1} \cdot (1/k),$, and let $\delta = \epsilon
\eta / 5k$.  We partition $\R^m$ into \emph{weight classes} as
follows:
  \begin{itemize}
  \item $\R^m_q = \set{S \in \R^m : \frac{1}{(1-\xi^2)q} \leq w(S) <
      \frac{1}{(1-\xi^2)(q-1)}}$, for $2 \leq q \leq k-1$.
  \item $\R^m_{k} = \set{S \in \R^m : \frac{\alpha}{1-\delta} \leq w(S) <
      \frac{1}{(1-\xi^2)(k-1)}}$
  \item $\R^m_{k+1} = \set{S \in \R^m : \alpha \leq w(S) < \frac{\alpha}{1-\delta}}$
  \end{itemize}
  We partition $\D^m$ similarly: $\D^m_q = \set{(w(S),S): S \in \R_q}$ for
  $2 \leq q \leq k+1$.  

  Consider now the set $\V^m$
  of perfect valuation profiles 
  on $[k]$ and $[m']=\set{1,\ldots,m/2}$, extended to $[m]$ by 
  zero-extension.
  For a given $v \in \V^m$ and $2 \leq q \leq k,$ let us say that 
  $v \in \V^m_q$ if the set $\R^m_q$ contains an allocation $S$ 
  that achieves at least a $(1+\xi)(q-1)/k$ approximation to the social
  welfare maximizer.  Finally, let us say that $v \in \V^m_{k+1}$
  if $v$ does not belong to $\V^m_q$ for any $q < k+1.$  Notice that if 
  $v \not\in \V^m_q$ then the best approximation ratio achievable
  using an allocation in $\D^m_q$ is at most 
  \begin{equation} \label{eq:best-approx}
     \frac{(1+\xi)(q-1)}{k} \cdot \frac{1}{(1-\xi^2)(q-1)}
     = \frac{1}{(1-\xi)k} = \frac{1}{k} + \frac{\epsilon}{2} < \alpha.
  \end{equation}
  However, by our assumption that $\A$ achieves an $\alpha$-approximation
  for all valuation profiles with $m \in \FF$, the range $\D^m$ must
  contain an $\alpha$-approximation to the social welfare maximizer.
  If $m \in \FF$ and $v \in \V^m_{k+1}$, therefore, it follows that
  $\D^m_{k+1}$ must contain an $\alpha$-approximation to the social
  welfare maximizer.

  By the
  pigeonhole principle, at least one $q$ satisfies
  \begin{equation} \label{eq:vq}
    |\V^m_q| \geq \frac{1}{k} \cdot k^{m'}.
  \end{equation} 
  Let $T_q$ denote the set of all $m \in \FF$ such that \eqref{eq:vq} holds.
  By the preceding discussion, we have $\FF = \cup_{q=2}^{k+1} T_q.$  We now
  proceed to prove that $T_q$ is a PCD set for all $q$, completing the
  proof. 

 \paragraph{Cases 1 and 2: Large weight classes ($q \leq k$).}
 To each allocation $S$ of $m$ items to $k$ players, we may associate
 a function $f_S : [m'] \rightarrow [k] \cup \set{\ast},$ that maps
 each item $x \in [m']$ to the player who receives that item in $S$,
 or $\ast$ if the item is unallocated.  Similarly, to each perfect
 valuation profile $v$ on $[k]$ and $[m']$ we may associate a function
 $f_v:[m'] \rightarrow [k]$ that maps each item to the unique player
 who assigns a nonzero valuation to that item.  Note that $S$ achieves
 a $c$-approximation to the social-welfare-maximizing allocation for
 $v$ if and only if the functions $f_S$ and $f_v$ differ on $(1-c)m'$
 or fewer elements of $[m'].$

  Assume now that $q \leq k.$ If $m \in T_q$ then at least $1/k$
  fraction of all perfect valuation profiles in $\V^m_q$ have an
  allocation $S \in \R^m_q$ that achieves a
  $(1+\xi)(q-1)/k$-approximation to the maximum social welfare.  Thus,
  for at least $1/k$ fraction of all perfect valuation profiles $v \in
  \V^m_q$, there is some $S \in \R^m_q$ such that the $f_S$ and $f_v$
  differ on $\left(1 - \frac{q-1}{k} - \frac{(q-1)\xi}{k} \right) m'$
  elements of $[m']$.  Applying Lemma~\ref{lem:shatter-covering},
  there is a set $W$ of at least $\ceiling{\sigma m}$ elements of
  $[m']$, and a set $N'$ of $q$ players in $[k]$, such that all
  allocations of $W$ to $N'$ occur as restrictions of allocations in
  $\R^m_q.$ We refer to $W$ as a ``shattered'' subset of $[m']$.

  When $q<k$ (Case 1 of our argument) we may now construct, via a
  non-uniform polynomial-time reduction, an MIWR allocation rule for
  the $q$-player problem that achieves a $[(1-\xi^2)q]^{-1}$
  approximation for $\ceiling{\sigma m}$ items when $m \in T_q.$ Using
  $W$ and $N'$ -- as defined above -- as advice, embed the instance
  into an input for $\A$ by using players $N'$ and items $W$ in the
  obvious way: Give player in $[k] \sm N'$ an all-zero
  valuation. Moreover, extend the valuation of a player $i \in N'$ to
  the entire set of items $[m]$. Now, run $\A$ on the embedded
  instance. Notice that every allocation of $W$ to $N'$ appears as the
  restriction of some allocation in $R^m_q$, and is therefore in the
  range of $\A$ with weight at least $[(1-\xi^2)q]^{-1}$. Thus, $\A$
  must output a weighted allocation with expected welfare at least
  $[(1-\xi^2)q]^{-1}$ of the optimal. The result is a non-uniform
  poly-time MIWR mechanism for $q$ players with approximation ratio
  bounded away from $1/q$ for all integers $\hat{m}=\ceiling{\sigma
    m}$ such that $m \in T_q$.  By our induction hypothesis, this
  implies that the sum of all such $\hat{m}$ is a PCD set.  The fact
  that $T_q$ itself is a PCD set now follows as an easy application of
  the definition of PCD.

  When $q=k$ (Case 2 of our argument) using the same embedding yields
  an algorithm for $k$ players that achieves an $\alpha/(1-\delta)$
  approximation for all $\hat{m} = \ceiling{\sigma m}$ such that $m
  \in T_q.$ By our definition of $\alpha$, this implies that the set
  of all such $\hat{m}$ is a PCD set, which again implies that $T_q$
  is a PCD set.

  \paragraph{Case 3: The smallest weight class ($q=k+1$).}

  The remaining case is $q=k+1$.  When $m \in T_{k+1}$, by our
  definition of $\V^m_{k+1}$, at least $1/k$ fraction of all
  (extended) perfect valuation profiles $v \in \V^m$ have a weighted
  allocation $(w(S),S) \in \D^m_{k+1}$ that is an
  $\alpha$-approximation to the social welfare maximizing allocation
  for $v$.
  Since $\alpha \leq w(S) < \alpha/(1-\delta)$, the pure allocation
  $S$ must be a $(1-\delta)$-approximation to the social welfare
  maximizer.  On the other hand, our assumption is that maximizing
  social welfare is APX-hard, even for two players; to be specific,
  recall that $\eta>0$ was chosen such that it is NP-hard to
  approximate the maximum social welfare with approximation factor
  $1-\eta$.  We now complete the proof by exhibiting a randomized,
  non-uniform polynomial time algorithm that achieves a
  $(1-\eta)$-approximation for the $k$-player problem with $m/2$
  items, for all $m \in T_{k+1}$.  Notice that the de-randomization
  argument of Adleman \cite{Adl78} for proving $\compclass{BPP} \sse
  \compclass{P/Poly}$ can be used to de-randomize this to a
  non-uniform deterministic $(1-\eta)$-approximation for the
  $k$-player problem with $m/2$ items, for all $m \in T_{k+1}$. The
  reader unfamiliar with Adleman's argument may refer to Section
  \ref{subsec:mainresult}, where we use the argument to establish
  Theorem \ref{thm:main}.

Recall that 
$\delta= \epsilon \eta / 5k$. We will now use $\A$ to get a
$(1-\eta)$-approximate solution for an instance with $k$ players and
$m'=m/2$ items for all \comment{suffiently large?} $m \in T_{k+1}$
. We embed the instance on $k$ players and $m'$ items into $\A$ in the
following way.  We use $M_1 = [m] \sm [m']$ and let $v_i: 2^{M_1} \to
\RR$ denote the resulting valuation of player $i$.  We assume without
loss of generality that $\max_i v_i(M_1) = 1$.  Next, we modify each
player's valuation function by ``mixing in'' a perfect valuation
profile on the remaining set of items $M_2 = [m'].$ We draw a perfect
valuation profile $(v'_1,\ldots,v'_k)$ on $N$ and $M_2$ uniformly at
random. Now, we ``mix'' the original valuations $v$ with $v'$, in
proportions $1$ and $\gamma = \frac{4k}{\epsilon m'}$, to yield the
following \emph{hybrid valuation profile} $v^*: 2^M \to \RRp$.
\[v^*_i = v_i \oplus \gamma v'_i\]
We abuse notation and use $v_i$ [$v_i'$] to refer also to the
zero-extension of $v_i$ [$v_i'$] to $M$. Let $OPT=\max_{S \in \X}
v(S)$.  Similarly, let $OPT'=\max_{S \in \X} v'(S)$ and let
$OPT^*=\max_{S \in X} v^*(S)$.  Notice that $1 \leq OPT \leq k$, and
that $OPT'=m'$, by construction. Since $v$ and $v'$ are defined on a
disjoint set of items, it is easy to see that $OPT^* = OPT + \gamma
OPT'$. The scalar $\gamma$ was carefully chosen so that the following
facts hold:
\begin{enumerate}
\item The random valuation profile $v'$ accounts for a majority share
  of $v^*$ in any optimal solution. Specifically, $\gamma OPT' \geq
  \frac{4}{\epsilon} OPT$. This implies that an approximation to the
  optimal welfare using $v^*$ gives a similar approximation to the
  optimal welfare using $v'$. To be more precise, it can be shown by a
  simple calcluation that:

\begin{claim}\label{claim:approximates_perfect}
  For any $S \in \X$ and any $\beta \geq 0$, if $v^*(S) \geq \beta OPT^*$ then $v'(S)
  \geq (\beta - \epsilon/2) OPT'$.
\end{claim}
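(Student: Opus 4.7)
The plan is to prove the claim by a direct arithmetic computation, using the key facts listed immediately after the definition of $\gamma$: namely, that $OPT^{*} = OPT + \gamma OPT'$ (since the valuations $v$ and $v'$ are supported on disjoint item sets), that $OPT \leq k$, and that $OPT' = m'$, so that $\gamma OPT' = 4k/\epsilon$.

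First, I would observe that since the items $M_1$ and $M_2$ are disjoint and $v_i,v_i'$ are the zero-extensions of the original valuations, $v^{*}(S) = v(S) + \gamma v'(S)$ for every allocation $S$. So the hypothesis $v^{*}(S) \geq \beta OPT^{*}$ rearranges to
\begin{equation*}
\gamma v'(S) \;\geq\; \beta \gamma OPT' + \beta OPT - v(S) \;\geq\; \beta \gamma OPT' - (1-\beta) OPT,
\end{equation*}
where in the second inequality I use $v(S) \leq OPT$ (and we may assume $\beta \leq 1$, since otherwise the hypothesis forces $v(S) \geq OPT$ and then the conclusion is trivial from $v'(S) \geq \beta OPT' - (1-\beta)OPT/\gamma$, which is even stronger).

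Next I would divide through by $\gamma OPT'$ to obtain
\begin{equation*}
\frac{v'(S)}{OPT'} \;\geq\; \beta - (1-\beta)\,\frac{OPT}{\gamma OPT'}.
\end{equation*}
Plugging in $\gamma OPT' = \gamma m' = 4k/\epsilon$ and $OPT \leq k$ yields $OPT/(\gamma OPT') \leq \epsilon/4$, and since $1-\beta \leq 1$ the right-hand side is at least $\beta - \epsilon/4 \geq \beta - \epsilon/2$, giving the conclusion.

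I do not anticipate any real obstacle: this is essentially a one-line computation hinging on the fact that $\gamma$ was chosen to make the ``random'' portion $\gamma OPT'$ dominate the ``adversarial'' portion $OPT$ by a factor of $4/\epsilon$. The only mild care required is in handling the regime $\beta$ close to $1$ (so that the slack $(1-\beta)OPT$ is small) versus $\beta$ small (where the bound is loose but still sufficient), both of which are absorbed by the crude bound $1 - \beta \leq 1$.
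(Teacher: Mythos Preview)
Your argument is correct and is exactly the ``simple calculation'' the paper alludes to but does not spell out: you use $v^*(S)=v(S)+\gamma v'(S)$, $OPT^*=OPT+\gamma OPT'$, and the key inequality $OPT/(\gamma OPT')\leq \epsilon/4$ (equivalently $\gamma OPT'\geq (4/\epsilon)OPT$) highlighted in the paper. The parenthetical about $\beta>1$ is unnecessary, since your displayed inequality $v'(S)/OPT'\geq \beta-(1-\beta)\,OPT/(\gamma OPT')$ already yields the conclusion for all $\beta\geq 0$ (when $\beta>1$ the correction term is nonnegative), so you could simply drop that aside.
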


\item The original valuation profile $v$ accounts for a
  constant-factor share of $v^*$ in any optimal solution. Specifially
  $OPT \geq \frac{\epsilon}{4k} (\gamma OPT')$. This implies that a
  $(1-\delta)$-approximation to the optimal welfare using $v^*$ gives
  a $(1-O(\delta))$-approximation to the optimal welfare using $v$. To be
  more precise, it can be shown by a simple calculation that:
  \begin{claim}\label{claim:approximate_original}
  For any $S \in \X$, if $v^*(S) \geq (1-\delta) OPT^*$ then $v(S)
  \geq (1- \frac{5 k}{\epsilon} \delta) OPT = (1-\eta) OPT$.
 \end{claim}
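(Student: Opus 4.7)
The plan is to use the additive split $v^* = v \oplus \gamma v'$ over the disjoint item sets $M_1,M_2$, so that for any allocation $S$ we have $v^*(S) = v(S) + \gamma v'(S)$ and in particular $OPT^* = OPT + \gamma OPT'$. The hypothesis $v^*(S) \geq (1-\delta)OPT^*$ then rearranges to
\[
  v(S) \;\geq\; (1-\delta)OPT + (1-\delta)\gamma OPT' - \gamma v'(S),
\]
and bounding $v'(S) \leq OPT'$ gives the clean estimate $v(S) \geq (1-\delta)OPT - \delta \gamma OPT'$.

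Next I would convert the additive loss $\delta \gamma OPT'$ into a multiplicative loss relative to $OPT$. By the choice $\gamma = 4k/(\epsilon m')$ and the fact that $OPT' = m'$ (since $v'$ is a perfect valuation profile on $m'$ items), we have $\gamma OPT' = 4k/\epsilon$. Combined with the normalization $\max_i v_i(M_1) = 1$, which forces $OPT \geq 1$, this yields $\gamma OPT' \leq (4k/\epsilon)\,OPT$, and hence
\[
  v(S) \;\geq\; OPT - \delta OPT - \tfrac{4k}{\epsilon}\delta\,OPT \;=\; \Bigl(1 - \bigl(1 + \tfrac{4k}{\epsilon}\bigr)\delta\Bigr)OPT.
\]

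Finally I would absorb the constant: for the relevant regime ($\epsilon$ small, $k \geq 1$) one has $1 + 4k/\epsilon \leq 5k/\epsilon$, so $v(S) \geq (1 - \tfrac{5k}{\epsilon}\delta)OPT$, and substituting $\delta = \epsilon \eta / (5k)$ collapses this to $v(S) \geq (1-\eta)OPT$, as claimed. The entire argument is a short chain of inequalities; the only non-routine point is noticing that the lower bound $OPT \geq 1$ (guaranteed by the normalization $\max_i v_i(M_1) = 1$) is exactly what prevents the perturbation $\gamma v'$ from drowning out the original instance, which is precisely the motivation behind the particular choice of the mixing weight $\gamma$.
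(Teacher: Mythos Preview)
Your argument is correct and is exactly the ``simple calculation'' the paper alludes to: the paper itself does not spell out the proof, but the key inequality it highlights just before the claim, $OPT \geq \tfrac{\epsilon}{4k}(\gamma\,OPT')$, is precisely your step $\gamma\,OPT' \leq \tfrac{4k}{\epsilon}\,OPT$, and the rest is the same rearrangement. The only point worth making explicit is the justification of $1 + 4k/\epsilon \leq 5k/\epsilon$: in the paper's setup $\epsilon$ is chosen so that $\alpha > 1/k + \epsilon$ with $\alpha \leq 1$, hence $\epsilon < 1 \leq k$, which is exactly the inequality $\epsilon \leq k$ you need.
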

\end{enumerate}

We are now ready to show that running $\A$ on the valuations $v^*$
will yield, with constant probability, an allocation that is a
$(1-\eta)$-approximation to the optimal welfare for the original
valuations $v$, when $m \in T_{k+1}$.
Let $(w(S),S)$
be the weighted allocation output by $\A$; note that $S$ is a random
variable over draws of $v'$. Since $\A$ is an $\alpha$ approximation
algorithm, the welfare $w(S) v^*(S)$ is at least $\alpha OPT^*
\geq(1/k + \epsilon) OPT^*$ with probability $1$. This implies that
$v^*(S) \geq \left(\frac{1}{w(S)\cdot k} + \frac{\epsilon}{w(S)}\right)
OPT^*$. By Claim \ref{claim:approximates_perfect}, we see that $v'(S)$
is not too far behind: $v'(S) \geq \left(\frac{1}{w(S)\cdot k} +
\frac{\epsilon}{w(S)} - \frac{\epsilon}{2}\right) OPT'$. Moreover, this
gives:
\begin{equation} \label{eq:vprime-approx}
 w(S) v'(S) \geq \left(\frac{1}{k} + \frac{\epsilon}{2}\right) OPT'
\end{equation}
Recall from equation \eqref{eq:best-approx} that if $v' \in \V^m_{k+1}$
then for $2 \leq q \leq k,$ there is no $S \in \R^m_q$ that satisfies 
\eqref{eq:vprime-approx}, hence any such $S$ satisfying
\eqref{eq:vprime-approx} must belong to $\D^m_{k+1}.$  
Also, by our assumption that $m \in T_{k+1}$, the probability
that $v' \in V^m_{k+1}$ is at least $1/k$.  

We have thus established that running $\A$ on the random input
$v^*$ yields, with probability at least $1/k$, an outcome
$(w(S),S)$ in $\D^m_{k+1}$.  Using the fact that 
$w \leq \alpha/(1-\delta)$ and $w(S) v^*(S) \geq \alpha OPT$, 
we conclude that $S$ is
$(1-\delta)$-approximate for $v^*$ also with probability $1/k$:
\[ v^*(S) \geq (1-\delta) OPT^* \]
Invoking Claim \ref{claim:approximate_original}, we conclude that
$v(S) \geq (1-\eta) OPT$ with constant probability over draws of
$v'$. Since $w(S)$ is at least $1/k$, $S$ is
output by $\A$ with constant probability. This completes the proof.

\subsection{Main Result}
\label{subsec:mainresult}

In this section, we complete the proof of Theorem
\ref{thm:main}. First, we make the observation that running a
randomized MIWR algorithm multiple times independently and returning the
best allocation output by any of the runs results in another
randomized MIWR algorithm.

\begin{lemma}\label{lem:miwr_closed}
  Fix a randomized MIWR algorithm $\A$ and a positive integer $r$. 
  Let $\A^r$ be the algorithm
  that runs $r$ independent executions of $\A$  on its input, 
  and of the $r$ allocations
  returned, outputs the one with greatest welfare. $\A^r$ is also
  randomized MIWR.
\end{lemma}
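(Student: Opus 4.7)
The plan is to unpack the definitions and exhibit, for each realization of $\A^r$'s internal randomness, an explicit range of weighted allocations that certifies the induced allocation rule is deterministic MIWR. First I would observe that since $\A$ is randomized MIWR, every realization of $\A$'s coins yields a deterministic MIWR rule with some range of weighted allocations; a realization of $\A^r$'s coins is precisely an $r$-tuple of such realizations, yielding $r$ independent samples $f_1, \ldots, f_r$ with ranges $\D_1, \ldots, \D_r \subseteq [0,1] \times \X([m],[k])$.

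Next I would characterize $\A^r$ conditioned on this realization. On input $v = (v_1, \ldots, v_k)$, each $f_i$ outputs the weighted allocation $(w_i^*, S_i^*) \in \D_i$ that maximizes the expected welfare $w v(S)$ over $\D_i$, and by construction $\A^r$ returns the one among $(w_1^*, S_1^*), \ldots, (w_r^*, S_r^*)$ achieving the largest value of $w_i^* v(S_i^*)$. The key identity to invoke is simply that a max of maxes is a max over the union, namely $\max_i \max_{(w,S) \in \D_i} w v(S) = \max_{(w,S) \in \bigcup_i \D_i} w v(S)$, so that the conditional rule coincides with the MIWR rule whose range is $\bigcup_{i=1}^{r} \D_i$.

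Since the rule induced by every realization of $\A^r$'s random coins is deterministic MIWR, $\A^r$ satisfies the definition of randomized MIWR. I do not anticipate a real obstacle; the proof is essentially a one-line closure property of MIWR ranges under union. The one interpretive subtlety worth flagging is that ``the allocation with greatest welfare'' in the lemma statement must be read as the weighted allocation with greatest expected welfare $w v(S)$, rather than as a pure allocation obtained by sampling from each weighted allocation and then taking the best sample; under the latter reading, the output distribution conditioned on $f_1, \ldots, f_r$ could place mass on up to $r+1$ distinct outcomes and hence could not in general be expressed as a single weighted allocation, so the claim as stated requires the former reading.
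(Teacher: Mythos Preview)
Your proposal is correct and follows essentially the same approach as the paper: condition on the ranges $\D_1,\ldots,\D_r$ realized by the $r$ independent executions, and observe that selecting the best of the $r$ maximizers is the same as maximizing over $\D_1 \cup \cdots \cup \D_r$. Your version is more detailed and the interpretive remark about reading ``greatest welfare'' as greatest \emph{expected} welfare of the weighted allocation is apt, but the core argument is identical.
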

\begin{proof}
  Condition on $\D_1, \ldots, \D_r$, the ranges of $\A$ on the $r$
  independent executions.  $\A$ maximizes expected welfare over
  $\D_i$ on execution $i$. Therefore $\A^r$ maximizes over 
  $\D_1 \union \cdots \union \D_r$.
\end{proof}

Now, we derive Theorem \ref{thm:main} from Theorem
\ref{thm:main_poly}, using a de-randomization argument similar to that
of Adleman \cite{Adl78}. Assume for a contradiction that $\A$ is a
randomized MIWR algorithm that runs in polynomial time and achieves an
expected approximation ratio $1/k + \epsilon$ for each input $m$ and
$v_1,\ldots,v_k$.  Let $n$ denote the number of bits in the input, and
let $\ell(n)$ be a polynomial bounding the length of the random string
drawn by $\A$. We will describe a polynomial-time with
polynomial-advice MIWR algorithm that achieves an approximation ratio
of $1/k + \epsilon / 2$, which contradicts
Theorem~\ref{thm:main_poly}.

Let $r(n)=2 n /\epsilon^2$ and let $\A' = \A^{r(n)}$.  By Lemma
\ref{lem:miwr_closed}, $\A'$ is randomized MIWR, runs in polynomial
time, and draws at most $\ell(n) r(n)$ random bits. Let $X_i$ be the
fraction of the optimal social welfare achieved by the allocation
output on the $i$'th run of $\A$. The random variables
$X_1,\ldots,X_{r(n)}$ are independent, $0 \le X_i \le 1$, and $E[X_i]
\ge 1/k + \epsilon$. For each input of length $n$, the probability
that none of the $r(n)$ runs of~$\A$ return an allocation with welfare
better than $1/k + \epsilon/2$ of the optimal can be upper-bounded
using Hoeffding's inequality:

\begin{equation*}\label{eq:Hoeffding}Pr \left[\max_i X_i \le \left({\frac{1}{k}}
+ {\frac{\epsilon}{2}} \right)\right] \le Pr \left[E\left(\sum_i X_i\right) -
\sum_i X_i \ge
\frac{\epsilon r(n)}{2} \right] \le e^{-\epsilon^2 r(n) / 2} = e^{-n}.
\end{equation*}

The number of different inputs of length $n$ is $2^n$. Thus, using the
union bound and the above inequality, the probability that $\A$
outputs a $(1/k + \epsilon/2)$-approximate allocation on all
inputs of length $n$ is non-zero. Therefore, for each $n$ there is choice of at
most $\ell(n) r(n)$ random bits such that $\A'$ achieves a $1/k +
\epsilon/2$ approximation for all inputs. Using this as the advice
string, this contradicts Theorem \ref{thm:main_poly}. This completes
the proof of Theorem \ref{thm:main}.



\section{Hardness Result for Budgeted Additive Valuations}
\label{sec:MIR-BA}

In this section, we prove the following theorem:

\begin{theorem}\label{thm:randomized_MIR-BA}
There is no polynomial time randomized MIR mechanism that achieves
$\frac{1}{k} + \epsilon$ approximation of the optimal social welfare
for $k$~bidders with budgeted additive valuations, unless
$\compclass{NP} \subseteq \compclass{P/Poly}$.
\end{theorem}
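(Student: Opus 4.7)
The plan is to combine the derandomization argument from Section~\ref{subsec:mainresult} with a structural analysis of the range of a non-uniform deterministic MIR mechanism, culminating in an application of Lemma~\ref{lem:shatter-eps-far} and a reduction from an NP-hard problem (two-player budgeted-additive social welfare maximization, which contains \textsc{Partition}). First I would convert the randomized MIR mechanism into a non-uniform polynomial-time deterministic MIR mechanism achieving approximation ratio $\frac{1}{k} + \epsilon/2$ on all inputs of each length $n$. The construction is identical to the one in Section~\ref{subsec:mainresult}: run the randomized mechanism $r(n) = \Theta(n/\epsilon^2)$ times in parallel, keep the best allocation produced, and invoke Lemma~\ref{lem:miwr_closed} (whose proof transparently adapts from MIWR to MIR) to see that the result is still randomized MIR; then Hoeffding plus a union bound over $2^n$ inputs shows that some fixed random string works simultaneously for every input, which we install as advice.

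Let $\R^m$ denote the (pure) range of the resulting non-uniform MIR mechanism on $m$ items, and fix $m$ for which the mechanism is $(\frac{1}{k}+\epsilon/2)$-approximate. The next step is to establish two structural properties of $\R^m$ so that Lemma~\ref{lem:shatter-eps-far} applies. For each allocation $S \in \R^m$ we associate the function $f_S : [m] \to [k] \cup \{\ast\}$.
\textbf{Largeness.} I would show $|\R^m| \geq e^{\alpha m}$ for some constant $\alpha = \alpha(k,\epsilon) > 0$ by an adversarial counting argument: consider budgeted-additive valuation profiles built by selecting, for each item, a player to value it together with small $\{0,1\}$-weighted perturbations of budgets; there are $2^{\Omega(m)}$ such profiles whose $(\frac{1}{k}+\epsilon/2)$-approximate allocations are pairwise distinct, forcing $\R^m$ to be correspondingly large.
\textbf{Spreading.} I would show that for a suitable constant $\ell$, every $\ell$-tuple of allocations in $\R^m$ contains a pair $(g_i, g_j)$ that disagrees, as functions into $V = [k]$, on at least $\epsilon' m$ items. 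The contrapositive says some $\ell$ allocations nearly coincide on the items they jointly assign to players; I would then build a budgeted-additive instance whose optimum is attained only by allocations that disagree substantially with each of these $\ell$ on the jointly-assigned items (using sharply tuned budgets to enforce that the mechanism's only way to beat $1/k$ is to match a particular item-to-player map on an $\Omega(m)$-sized chunk), contradicting $(\frac{1}{k}+\epsilon/2)$-approximation.

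With both hypotheses in hand, Lemma~\ref{lem:shatter-eps-far} delivers a subset $W \subseteq [m]$ of size at least $\delta m$ and two players $a,b \in [k]$ such that every function $W \to \{a,b\}$ appears as the restriction of some allocation in $\R^m$. I would then reduce the NP-hard two-player budgeted-additive auction on $\delta m$ items to the given MIR mechanism: embed the hard instance on players $a,b$ and items $W$, extending each valuation to the other players as identically zero and to items outside $W$ by zero-extension (a budgeted-additive valuation with the same budget). Because every $W \to \{a,b\}$ allocation is in the range and the other players add no welfare, the MIR mechanism returns the exact social welfare maximizer on $W$, solving the hard two-player problem. Combining this with the padding property of budgeted-additive CAs and Lemma~\ref{lem:pcd}/Lemma~\ref{lem:not-pcd} (run over all $m$ as in Section~\ref{subsec:main_poly}) yields $\compclass{NP} \subseteq \compclass{P}/\poly$.

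The main obstacle I anticipate is the spreading step: largeness alone is known to be insufficient (as the authors emphasize in Section~\ref{sec:techniques} with the ``all-to-player-$1$'' example), and one must carefully engineer budgeted-additive instances whose optima simultaneously certify that no $\ell$-tuple of allocations can ``agree cheaply'' on their jointly allocated items. Getting the budgets and item values tuned so that any $\frac{1}{k}+\epsilon/2$ approximate allocation must differ from each of the $\ell$ candidates on $\Omega(m)$ items of $V$ (rather than merely on items where one of the two is unallocated) is the delicate part, and closely parallels the analogous step in~\cite{MPSS09}.
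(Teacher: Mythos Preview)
Your derandomization step and the final embedding/reduction are fine and match the paper. The substantive gap is in the \emph{spreading} step, and it stems from applying Lemma~\ref{lem:shatter-eps-far} to the wrong set.

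You propose to verify the hypothesis of Lemma~\ref{lem:shatter-eps-far} for the full range $\R^m$. This cannot work: nothing prevents an MIR mechanism's range from containing $\ell$ allocations that are pairwise $\epsilon'$-close (e.g.\ the range could include every allocation that gives player~$1$ all but one item). Your proposed contrapositive argument does not yield a contradiction: if $\R^m$ contains $\ell$ pairwise-close allocations and you construct an instance whose good allocations are far from each of them, the mechanism simply outputs some \emph{other} element of $\R^m$ on that instance---the $\ell$ close allocations are not forced to be its output. So there is no way to refute the existence of such an $\ell$-tuple in $\R^m$ from the approximation guarantee alone. (Relatedly, your largeness argument would need to show that many inputs have pairwise distinct outputs, which is essentially a spreading statement and is at least as hard.)

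The paper avoids this by applying Lemma~\ref{lem:shatter-eps-far} not to $\R^m$ but to the \emph{multiset $\mathcal{R}$ of the mechanism's outputs} on a carefully constructed family~$F$ of $e^{\alpha m}$ perfect valuation profiles. Concretely: Lemma~\ref{lemma:largeF} builds $F$ as a family of covering $k$-partitions every $\ell$ of which are $\epsilon$-apart (an averaging/Chernoff existence argument), and Lemma~\ref{lemma:smallwelfare} shows that any single allocation---hence any $\ell$ allocations that are pairwise $\frac{\epsilon}{\ell^2}$-close---achieves at most roughly $\frac{1}{k}+O(\sqrt{k/\ell})$ average welfare on $\ell$ $\epsilon$-apart perfect valuations. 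Since the mechanism is $(1/k+\epsilon')$-approximate on \emph{each} input in $F$, its $\ell$ outputs on any $\ell$ inputs from $F$ cannot all be pairwise close; this is exactly the spreading hypothesis of Lemma~\ref{lem:shatter-eps-far}, and largeness ($|\mathcal{R}|=|F|=e^{\alpha m}$) comes for free. The missing idea in your plan is thus to tie both hypotheses of Lemma~\ref{lem:shatter-eps-far} to the mechanism's \emph{outputs on a designed input family}, rather than trying to certify them for the range itself.
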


Notice that this theorem implies that all universally truthful
randomized MIR mechanisms cannot achieve $\frac{1}{k} + \epsilon$
approximation for $k$~bidders. In the proof we will use the term
\emph{$k$-partition} interchangeably with an allocation for
$k$~bidders. As in the previous section, an allocation does not
necessarily allocate all items. A partition corresponding to a total
allocation will be called \emph{covering}. Perfect valuations
generated by total allocations will also be used in the proof. We
first study the abundance of ``orthogonal'' partitions of~$M=[m]$. The
following definition formalizes this notion.






\begin{definition} Let $\mathcal{T}$ be a set of
$k$-partitions of~$M$: $\mathcal{T} = \{(T_1^i, T_2^i, \cdots T_k^i)
\: | \: i \in [\ell] \}$, we say these partitions are
\emph{$\epsilon$-apart} for $\epsilon > 0$, if,
$$\forall (I_1, I_2, \cdots, I_\ell) \in [k]^\ell, \;\;  \left|\bigcap_{i =
1}^\ell T^i_{I_{i}} \right| \le \left(\frac{1}{k}\right)^\ell (1 +
\epsilon) |M|.$$
\end{definition}


\begin{lemma}\label{lemma:largeF}
  For every pair of integers $k$ and $\ell$, and every $\epsilon$
  satisfying $0< \epsilon < 1$, there exists $\alpha > 0$ such that
  there exists a set~$F$ of covering $k$-partitions of~$M$, where $|F|
  = e^{\alpha m}$, and every $\ell$ elements of~$F$ are
  $\epsilon$-apart.
\end{lemma}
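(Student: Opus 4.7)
The natural approach is the probabilistic method: pick $N = e^{\alpha m}$ random covering $k$-partitions of $M$ by drawing independent uniform functions $f_1,\dots,f_N : M \to [k]$ (where $f_i(x) = j$ means $x \in T^i_j$), and show that for a suitable constant $\alpha > 0$ we can prune a small number of ``bad'' $\ell$-tuples while retaining $\Omega(N)$ of them, so that every surviving $\ell$-tuple is $\epsilon$-apart.

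The plan is as follows. Fix any $\ell$-tuple of indices $i_1<\dots<i_\ell$ and any $(I_1,\dots,I_\ell)\in[k]^\ell$. Under the uniform random partition model, each element $x\in M$ lies in $\bigcap_{j=1}^\ell T^{i_j}_{I_j}$ independently with probability $p := (1/k)^\ell$, so $\bigl|\bigcap_j T^{i_j}_{I_j}\bigr|$ is a sum of $m$ i.i.d.\ Bernoulli$(p)$ random variables with mean $pm$. A standard Chernoff bound then gives
\[
   \Pr\!\left[\,\bigl|\bigcap_j T^{i_j}_{I_j}\bigr| > (1+\epsilon)\,p\,m\right] \;\le\; e^{-c m}
\]
for some constant $c = c(\epsilon,k,\ell) > 0$. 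Taking the union bound over the $k^\ell$ choices of $(I_1,\dots,I_\ell)$, a fixed $\ell$-tuple of partitions fails to be $\epsilon$-apart with probability at most $\beta := k^\ell e^{-cm}$.

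Next, I would bound the expected number of ``bad'' $\ell$-tuples among the $N$ random partitions by
\[
   \binom{N}{\ell}\beta \;\le\; \frac{N^\ell}{\ell!}\, k^\ell e^{-cm}.
\]
Choosing $\alpha > 0$ small enough that $\alpha\ell < c$ (for instance $\alpha = c/(2\ell)$), the right side is at most $N/2$ for all sufficiently large $m$. Hence, with positive probability, at most $N/2$ of the $\ell$-tuples are bad. Deleting one partition from each bad tuple leaves a subfamily $F_0$ of size at least $N/2$ in which every $\ell$-subset is $\epsilon$-apart. Shrinking $\alpha$ slightly (e.g.\ to $\alpha' = \alpha/2$) absorbs the factor of $2$ and the finitely many small values of $m$, giving $|F| \ge e^{\alpha' m}$ as required.

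There is no real obstacle here beyond bookkeeping: the only step that requires any care is the Chernoff computation and verifying that the exponent $c$ depends only on $\epsilon, k, \ell$ (not on $m$), so that choosing $\alpha$ strictly less than $c/\ell$ makes the expected bad-tuple count beat $N$ for all large $m$. Everything else is standard union-bounding and deletion.
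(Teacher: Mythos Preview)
Your proposal is correct and follows essentially the same approach as the paper: sample random covering $k$-partitions, apply a Chernoff bound to each fixed $\ell$-tuple and index vector, and union-bound. The only minor difference is that the paper takes $\alpha$ small enough that the union bound over all $\binom{N}{\ell} k^\ell$ events is already strictly less than $1$, so with positive probability \emph{no} $\ell$-tuple is bad and the deletion step is unnecessary; your pruning argument works too but is slightly more than you need.
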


The proof of Lemma~\ref{lemma:largeF} is relegated to the appendix.
The next lemma shows that for valuations generated by partitions
that are apart, good welfare approximations require distinct
allocations.

\begin{lemma}\label{lemma:smallwelfare}
Given $\ell$ covering $k$-partitions that are $\epsilon$-apart,
consider the $\ell$ tuples of valuations generated by them. The sum
of social welfare achievable by a single allocation on these tuples
of valuations is at most $\left(\frac{1}{k} +
\sqrt{\frac{k\pi}{\ell}} + \frac{k 2^{-\ell/k}}{\ell \cdot \ln
2}\right)(1 + \epsilon)$ of the optimal.
\end{lemma}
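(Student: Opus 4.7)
The plan is to express $\sum_{i=1}^\ell v^i(S)$ for a single allocation $S$ as a sum over items of a combinatorial ``modal count,'' and then reduce the desired bound to a standard balls-and-bins expected-maximum estimate. Throughout, fix $\ell$ covering $k$-partitions $T^1,\ldots,T^\ell$ of $M$ that are $\epsilon$-apart, and the $\ell$ tuples of perfect valuations $(v^i_1,\ldots,v^i_k)_{i=1}^\ell$ they generate (choose every budget to be at least $m$ so budgets never bind, making each $v^i_j$ the $0/1$ additive indicator of $T^i_j$). The sum of optima over the $\ell$ tuples is $\ell m$, attained on tuple $i$ by the generating allocation $(T^i_1,\ldots,T^i_k)$.

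First, for any single allocation $S$, writing $\sigma_S(a)\in[k]\cup\{*\}$ for the bidder receiving item $a$ (with the term below taken as $0$ when $\sigma_S(a)=*$), I would compute
\[
\sum_{i=1}^\ell v^i(S) \;=\; \sum_{i=1}^\ell \sum_{j=1}^k |S_j \cap T^i_j| \;=\; \sum_{a \in M} \bigl| \{ i : a \in T^i_{\sigma_S(a)} \} \bigr|.
\]
Associate to each item $a$ its label vector $\lambda(a)\in[k]^\ell$ whose $i$-th coordinate is the unique bidder owning $a$ in $T^i$. Then the maximum over single allocations equals $\sum_{a\in M} M(\lambda(a))$, where $M(j_1,\ldots,j_\ell):=\max_{j'\in[k]} |\{i:j_i=j'\}|$ is the modal frequency: each item is best assigned to the bidder owning it in the largest number of partitions. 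Grouping items by label vector and invoking $\epsilon$-apartness gives
\[
\max_S \sum_{i=1}^\ell v^i(S) \;=\; \sum_{\vec I \in [k]^\ell} \Bigl| \bigcap_{i=1}^\ell T^i_{I_i} \Bigr|\, M(\vec I) \;\le\; \frac{(1+\epsilon)m}{k^\ell} \sum_{\vec I \in [k]^\ell} M(\vec I) \;=\; (1+\epsilon)\, m\, E[M(\vec I)],
\]
where in the final expectation $\vec I$ is uniform on $[k]^\ell$. Dividing by $\ell m$ reduces the lemma to the balls-and-bins bound $E[M(\vec I)]/\ell \le 1/k + \sqrt{k\pi/\ell} + k\,2^{-\ell/k}/(\ell\ln 2)$.

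To establish this, note that for uniform $\vec I$ the bin counts $N_j:=|\{i:I_i=j\}|$ form a $\mathrm{Multinomial}(\ell;1/k,\ldots,1/k)$ vector, each $N_j\sim\mathrm{Binomial}(\ell,1/k)$ has mean $\ell/k$, and $\sum_j N_j=\ell$. Since $\max_j N_j\ge \ell/k$ by pigeonhole,
\[
\max_j N_j - \frac{\ell}{k} \;=\; \max_j \Bigl( N_j - \frac{\ell}{k} \Bigr)_+ \;\le\; \sum_{j=1}^k \Bigl( N_j - \frac{\ell}{k} \Bigr)_+,
\]
so $E[\max_j N_j] \le \ell/k + k\cdot E[(N_1 - \ell/k)_+]$. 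Thus it suffices to prove the Binomial upper-deviation estimate $E[(N_1 - \ell/k)_+] \le \sqrt{\pi\ell/k} + 2^{-\ell/k}/\ln 2$, which plugged into the previous inequality and divided by $\ell$ yields the claimed ratio.

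The hard part will be this last Binomial estimate, where the specific constants $\pi$ and $\ln 2$ must be matched. My approach is to split $E[(N_1-\ell/k)_+] = \sum_{n>\ell/k}(n-\ell/k)\binom{\ell}{n}(1/k)^n(1-1/k)^{\ell-n}$ at a well-chosen cutoff near the mean: the ``near-mean'' contribution will be bounded via Stirling (equivalently, the classical identity giving a binomial's mean absolute deviation as $\sim\sqrt{2\sigma^2/\pi}$), producing the $\sqrt{\pi\ell/k}$ term; the upper-tail contribution will be bounded via the inequality $P(N_1\ge n)\le\binom{\ell}{n}(1/k)^n$ together with a geometric-series summation, producing the $2^{-\ell/k}/\ln 2$ correction that dominates when $\ell/k$ is small. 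Matching the two regimes at the cutoff, while carefully tracking the constants, is the only delicate point; the combinatorial reduction in the earlier steps becomes routine once the modal-count reformulation is in place.
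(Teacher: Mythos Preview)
Your combinatorial reduction is identical to the paper's: both rewrite the achievable sum as $\sum_{\vec I} P(\vec I)\,|\bigcap_i T^i_{I_i}|$, apply $\epsilon$-apartness to get $(1+\epsilon)m\cdot E[P(\vec I)]$ for uniform $\vec I\in[k]^\ell$, and reduce to bounding the expected maximum bin load. Your inequality $E[\max_j N_j]\le \ell/k + k\,E[(N_1-\ell/k)_+]$ is also exactly what the paper obtains, since its union-bound step $\Pr[Y>\delta\ell/k]\le k\,\Pr[Q_i>(1+\delta)\ell/k]$ followed by integration over $\delta$ is the same identity written in tail form.

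Where you diverge is in the final Binomial estimate. The paper does \emph{not} attack $E[(N_1-\ell/k)_+]$ by Stirling and direct summation. Instead it writes this expectation as $\tfrac{\ell}{k}\int_0^\infty \Pr[N_1>(1+\delta)\ell/k]\,d\delta$, splits the integral at $\delta=2e-1$, and applies the two standard multiplicative Chernoff forms: $\Pr[N_1>(1+\delta)\mu]<e^{-\delta^2\mu/4}$ for small $\delta$ and $<2^{-\delta\mu}$ for large $\delta$. The Gaussian integral $\int_0^\infty e^{-\delta^2\ell/(4k)}\,d\delta=\sqrt{\pi k/\ell}$ then produces the $\sqrt{k\pi\ell}$ term, and $\int_1^\infty 2^{-\delta\ell/k}\,d\delta=\tfrac{k\,2^{-\ell/k}}{\ell\ln 2}$ produces the second term --- both constants fall out with no cutoff-matching at all. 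Your proposed route via the mean-absolute-deviation asymptotic plus a separate geometric tail bound can be made to work (and would in fact give a tighter leading constant, since $\sqrt{2/\pi}<\sqrt{\pi}$), but the ``delicate'' constant-matching you anticipate is an artifact of the method; the paper's Chernoff-plus-integration approach sidesteps it entirely.
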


\begin{proof}
The sum of optimal social welfare for $\ell$ generated valuations is
easily seen to be $\ell m$. Let $\mathcal{T} = \{(T_1^i, T_2^i,
\ldots, T_k^i)\: | \: i \in [\ell] \}$ be the set of covering
$k$-partitions that generate the valuations. Let $S$ be the sum of
social welfare achieved by one single allocation~$R$, then $S$ is
maximized when $R(x) \in \underset{i \in [k]}{\mathrm{argmax}}
\sum_{j = 1}^\ell v_i^j(x)$ holds for every item $x \in M$. For $I
\in [k]^\ell$, define $Q_i(I)$ to be the number of times that $i$
occurs in~$I$, i.e., $|\{j \in [\ell] \:|\: I_j = i\}|$, and define
the \emph{plenty} of~$I$ to be $P(I) = \max_{i \in [k]} Q_i(I).$
Then we have
\begin{equation}\label{eq:boundS}S \le \sum_{I \in [k]^\ell} P(I)
\cdot |T^1_{I_1} \cap T^2_{I_2} \cap \cdots \cap T^\ell_{I_\ell}|
\le \frac{\sum_{I \in [k]^\ell} P(I) }{k^\ell}(1 +
\epsilon)m.\end{equation} The second inequality results from the
$\epsilon$-apartness of the partitions.

We recognize that $\frac{\sum_{I \in [k]^\ell} P(I) }{k^\ell}$ can
be seen as the expectation of a properly defined random variable
--- if $I'$ is a random variable uniformly distributed on
$[k]^\ell$, then this factor is exactly the expectation of $P(I')$.
The problem boils down to bounding $E[P(I')]$. Note that $E[Q_i(I')]
= \ell / k$. Let $Y(I) = P(I) - \frac{\ell}{k}$, then by the union
bound
$$Pr[Y(I') > \delta \cdot \frac{\ell}{k}] \le k \cdot Pr [Q_i(I') > (1 + \delta) \frac{\ell}{k}].$$
Applying Chernoff bound (Theorem~\ref{thm:Chernoff}), we get
\begin{eqnarray*}
E[Y(I')] &=& \int_0^\infty Pr_{I'}[Y(I') > \delta] \:
\mathrm{d}\delta
 = \frac{\ell}{k} \int_0^\infty Pr_{I'} [Y(I') >
 \left(\delta \cdot \frac{k}{\ell} \right) \frac{\ell}{k}]
 \: \mathrm{d}\left(\delta \cdot \frac{k}{\ell} \right) \\
&\le& \ell \left\{\int_0^{2e-1} Pr_{I'} [Q_i(I') > (1 + \delta)
\frac{\ell}{k}] \: \mathrm{d}\delta  + \int_{2e-1}^\infty Pr_{I'}
[Q_i(I') > (1 + \delta)
\frac{\ell}{k}] \mathrm{d}\delta  \right\} \\
&<& \ell \left(\int_0^\infty e^{-\delta^2 \ell/4k} \:
\mathrm{d}\delta + \int_1^\infty 2^{-\delta \ell/k} \:
\mathrm{d}\delta \right) = \sqrt{k\pi \ell} + \frac{k
2^{-\ell/k}}{\ln 2}.
\end{eqnarray*}
Consequently we also obtain that $E[P(I')] \le \frac{\ell}{k} +
\sqrt{k\pi \ell} + \frac{k 2^{-\ell/k}}{\ln 2}$. Substituting this
into \eqref{eq:boundS}, then dividing it by $\ell m$, the sum of
optimal social welfare, we get what the lemma claims.
\end{proof}

To finish the proof, we need the next lemma that connects to the
``shattering'' lemma.

\begin{definition}\label{def:epsilonfar}
Two k-partitions $(T_1, T_2, \cdots, T_k)$ and~$(T_1', T_2', \cdots,
T_k')$ are said to be \emph{$\epsilon$-far} if $\sum_{i \neq j} |T_i
\cap T_j'| \ge \epsilon m$. If two partitions are not
$\epsilon$-far, we say that they are \emph{$\epsilon$-close}.
\end{definition}

\begin{lemma}\label{lemma:shatter_BAMIR}
If an $MIR$ mechanism achieves at least $\frac{1}{k} +
\sqrt{\frac{k\pi}{\ell}} + \frac{k e^{-\ell/k}}{\ell \cdot \ln 2} +
\epsilon'$ approximation to the optimal social welfare for $k$
bidders, where $k$, $\ell$ and $\epsilon'$ are all fixed, and $\ell
\ge 10k$, then there is a $\delta > 0$, such that there is a subset
$S$ of items, with $|S| \ge \delta m$, and two bidders $i$ and~$j$,
and every allocation of items in~$S$ to $i$ and~$j$ is a restriction
of an allocation in the range of the mechanism.
\end{lemma}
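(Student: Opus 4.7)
The plan is to combine the apartness machinery developed in Lemmas~\ref{lemma:largeF} and~\ref{lemma:smallwelfare} with the generic shattering bound of Lemma~\ref{lem:shatter-eps-far}. First I fix parameters $\epsilon_0 > 0$ and $\epsilon'' > 0$ (both small relative to $\epsilon'$, in a way to be pinned down) and invoke Lemma~\ref{lemma:largeF} to produce a family $F$ of $e^{\alpha m}$ covering $k$-partitions of $M$ that are $\ell$-wise $\epsilon_0$-apart. For each $f \in F$, the perfect valuation profile generated by $f$ is budgeted additive (take the budget to equal $m$), and the MIR mechanism outputs some allocation $A(f)$ in its range $\mathcal{R}$ whose welfare on $f$ is at least $\beta m$, where $\beta$ denotes the hypothesized approximation ratio.

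Next I argue that $\mathcal{R}' := \{A(f) : f \in F\}$ is exponentially large. If a single $A \in \mathcal{R}$ were the mechanism's output on $\ell$ distinct partitions of $F$, then $A$ would have total welfare at least $\ell\beta m$ against those $\ell$ $\epsilon_0$-apart partitions, contradicting Lemma~\ref{lemma:smallwelfare} once $\ell \ge 10k$ and $\epsilon_0$ is small enough that the factor $(1/k + \sqrt{k\pi/\ell} + k 2^{-\ell/k}/(\ell \ln 2))(1+\epsilon_0)$ falls strictly below $\beta$ (the small discrepancy between $2^{-\ell/k}$ in Lemma~\ref{lemma:smallwelfare} and $e^{-\ell/k}$ in the hypothesis of the current lemma is lower-order and is absorbed into the slack $\epsilon'$). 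Hence $|\mathcal{R}'| \ge |F|/(\ell-1) \ge e^{\alpha' m}$ for some $\alpha' > 0$.

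The main step is to verify the hypothesis of Lemma~\ref{lem:shatter-eps-far}, namely that no $\ell$-tuple drawn from $\mathcal{R}'$ is pairwise $\epsilon''$-close, where two allocations are $\epsilon''$-close if they disagree on fewer than $\epsilon'' m$ items at which both values lie in $[k]$. Suppose for contradiction that $A_1, \ldots, A_\ell \in \mathcal{R}'$ are pairwise $\epsilon''$-close, and pick distinct $f_1, \ldots, f_\ell \in F$ with $A(f_i) = A_i$; these $f_i$ are automatically $\ell$-wise $\epsilon_0$-apart. I define a consensus function $A^* : M \to [k] \cup \{*\}$ by letting $A^*(x)$ be the plurality non-$*$ value among $A_1(x), \ldots, A_\ell(x)$, or $*$ if every $A_i(x) = *$. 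For each $j$, on all but at most $(\ell-1)\epsilon'' m$ items of $W_j = \{x : A_j(x) = f_j(x)\}$, every $A_i(x)$ is either $*$ or equal to $A_j(x) \in [k]$ (by the pairwise closeness bounds summed over $i \ne j$), so the plurality non-$*$ value is $A_j(x) = f_j(x)$. Hence the welfare of $A^*$ on $f_j$ is at least $\beta m - (\ell-1)\epsilon'' m$; summing over $j$ gives total welfare at least $\ell \beta m - \ell(\ell-1)\epsilon'' m$, which contradicts Lemma~\ref{lemma:smallwelfare} applied to the single allocation $A^*$ once $(\ell-1)\epsilon'' < \epsilon'/2$. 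The main obstacle is precisely the need for this consensus construction: $A^*$ need not lie in $\mathcal{R}$, but Lemma~\ref{lemma:smallwelfare} bounds the welfare of \emph{any} fixed allocation, which is what lets the argument go through for all $k$ (a naive attempt to transfer welfare from $A_j$ to some other $A_i$ and then apply Lemma~\ref{lemma:smallwelfare} to that $A_i$ loses an additive $(1-\beta)m$ to unallocated items and yields only $(2\beta - 1)m$, which is vacuous for $k \ge 3$). With the hypothesis of Lemma~\ref{lem:shatter-eps-far} in hand, that lemma supplies a shattered $S \subseteq M$ of size $\delta m$ and a pair $a, b \in [k]$ with the required property, completing the proof.
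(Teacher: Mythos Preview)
Your proposal is correct and follows essentially the same approach as the paper: invoke Lemma~\ref{lemma:largeF} to get many $\epsilon$-apart partitions, use Lemma~\ref{lemma:smallwelfare} on a consensus allocation to rule out $\ell$ pairwise-close outputs, and then feed the result into Lemma~\ref{lem:shatter-eps-far}. The only cosmetic differences are that the paper works directly with the \emph{multiset} of outputs (so your separate cardinality step~3 is absorbed into the closeness argument) and builds the consensus by deleting the small disagreement set $D=\bigcup_{s<t}\bigcup_{i\neq j}(T_i^s\cap T_j^t)$ rather than by your plurality vote; both constructions yield the same welfare bound.
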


\begin{proof}
Let $\epsilon$ be $\epsilon'/3$, and by Lemma~\ref{lemma:largeF}
there is a set~$F$ of $\epsilon$-apart $k$-partitions, and $|F| =
e^{\alpha|U|}$ for some $\alpha > 0$. Let $\mathcal{R}$ be the
multi-set of allocations output by the mechanism on the tuples of
valuations generated by the partitions in~$F$. Note that
$|\mathcal{R}| = |F|$. We claim that in~$\mathcal{R}$, there can be
no $\ell$ partitions such that every two of them are
$\frac{\epsilon}{\ell^2}$-close. For a contradiction, suppose that
this is the case. Let $\{(T_1^i, T_2^i, \cdots, T_k^i) \: | \: i \in
[\ell]\}$ be these allocations, define $D$ to be $\bigcup_{1 \le s <
t \le \ell} \bigcup_{i \neq j} (T_i^s \cap T_j^t)$, then because of
the pairwise $\frac{\epsilon}{\ell^2}$-closeness of the partitions,
$|D| \le \epsilon m$. For each item not in~$D$, the $\ell$
allocations either allocate it in the same way, or some allocate it
in the same way and others do not allocate it to any bidder. By
Lemma~\ref{lemma:smallwelfare},  on $M \backslash D$ the allocations
can achieve at most $\frac{1}{k} + \sqrt{\frac{k\pi}{\ell}} +
\frac{k 2^{-\ell/k}}{\ell \cdot \ln 2} + \epsilon$ of the sum of
optimal welfare. Each item in~$D$ can contribute at most $\ell$ to
the sum of welfare, and in total they count at most $\epsilon$
fraction of the optimal, which is $\ell m$. Thus the mechanism can
achieve at most $\frac{1}{k} + \sqrt{\frac{k\pi}{\ell}} + \frac{k
2^{-\ell/k}}{\ell \cdot \ln 2} + \frac{2\epsilon'}{3}$ of the
optimal social welfare, contradicting the assumption on its
performance. Thus, there are no $\ell$-allocations in~$\mathcal{R}$
that are pairwise $\frac{\epsilon}{\ell^2}$-far. Applying
Lemma~\ref{lem:shatter-eps-far} to~$\mathcal{R}$, we finish the
proof.
\end{proof}

\begin{prevproof}{Theorem}{thm:randomized_MIR-BA}: Whenever we have the range of a mechanism
containing all allocations of items in a linearly smaller subset to
two bidders, we can use the mechanism with polynomial advice to
optimize the social welfare of an auction with fewer items.
Therefore, the condition of Lemma~\ref{lemma:shatter_BAMIR} should
not be satisfied for any~$\ell$ unless $\compclass{NP} \subseteq \compclass{P/Poly}$. Let
$\ell$ in Lemma~\ref{lemma:shatter_BAMIR} get arbitrarily big. We
see that, unless $\compclass{NP} \subseteq \compclass{P/Poly}$, any efficient MIR mechanism
cannot achieve $1/k + \epsilon$ approximation to the optimal social
welfare for $k$~bidders. Then by the same argument as in
Section~\ref{subsec:mainresult} (proof omitted here), we can extend
this to randomized MIR mechanisms and get
Theorem~\ref{thm:randomized_MIR-BA}.

\end{prevproof}




\section*{Acknowledgements}

We thank the authors of~\cite{BU09} and~\cite{MPSS09} --- Elchanan Mossel, Christos Papadimitriou, Michael Schapira, Yaron Singer, Dave Buchfuhrer, and Chris Umans --- for stimulating discussions on these topics and for their influence on this work.

\bibliographystyle{plain}
\bibliography{amplified}

\appendix

\section{Shattering Results}
\label{app:shatter}
\newcommand{\Ham}{{\operatorname{Ham}}}

We first formally define the notion of ``shattering'' in a more
general setting.

\begin{definition}
For any sets $U, \, V$ we interpret the notation $V^U$ to mean the
set of functions from $U$ to $V$. If $R \subseteq V^U, \, S
\subseteq U, \, L \subseteq V$, we say that $S$ is $(L,
q)$-shattered by $R$, for an integer $q$, $2 \le q \le |L|$, if
there exist $q$~functions $c_1,c_2, \ldots, c_q : S \rightarrow L$
that satisfy:
\begin{enumerate}
\item $\forall x \in S \;\; \forall i \neq j \;\; c_i(x) \neq c_j(x)$
\item $\forall h \in [q]^S \;\;
\exists f \in R \;\; \forall x \in S \;\; f(x) = c_{h(x)}(x)$
\end{enumerate}
\end{definition}

Intuitively, we associate with each element in~$S$ a range in~$L$ of
size exactly~$q$, and we say that $S$ is $(L, q)$-shattered by~$R$
if every function that maps each element in~$S$ to its associated
range is a restriction of an element in~$R$. In the context of
combinatorial auctions, we see $U$ as the set of items, and $V$ as
the set of bidders, plus a dummy bidder representing not allocating
the item. Then set of functions $V^U$ is the set of all possible
allocations.

The following observation bridges this notion of shattering to its
application to the combinatorial auctions in the paper.

\begin{observation} \label{obs:strongshatter}
If a subset~$S$ of size $\delta m$ is $(L, q)$-shattered by $R
\subseteq V^U$, then there exists a subset $L' \subseteq L$ and $S'
\subseteq S$, such that $|L'| = q$, $|S'| \ge |S| / {|L| \choose q}$
and $S'$ is $(L', q)$-shattered by~$R$.
\end{observation}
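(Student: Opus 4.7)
The plan is a pigeonhole argument followed by a relabeling of the shattering witnesses. Recall that shattering of $S$ by $R$ comes with $q$ functions $c_1,\ldots,c_q : S \to L$ whose values at each point $x$ are $q$ distinct elements of $L$. For each $x \in S$, define the ``color set'' $C(x) = \{c_1(x),\ldots,c_q(x)\} \subseteq L$, which has cardinality exactly $q$.

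First, apply pigeonhole to the map $x \mapsto C(x)$ from $S$ into the collection of $q$-subsets of $L$. Since there are only $\binom{|L|}{q}$ possible values, some particular $q$-subset $L' \subseteq L$ is hit by at least $|S|/\binom{|L|}{q}$ elements of $S$; let $S'$ be the set of such $x$, so $|S'| \ge |S|/\binom{|L|}{q}$ and $C(x) = L'$ for every $x \in S'$.

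Next, I will exhibit new witness functions $c'_1,\ldots,c'_q : S' \to L'$ establishing that $S'$ is $(L',q)$-shattered by $R$. Fix any enumeration $L' = \{a_1,\ldots,a_q\}$ and simply set $c'_i(x) = a_i$ for all $x \in S'$; condition~(1) of the shattering definition is immediate because $a_1,\ldots,a_q$ are distinct. For condition~(2), fix an arbitrary $h \in [q]^{S'}$. For each $x \in S'$, since $C(x)=L'$, there is a unique index $j(x) \in [q]$ with $c_{j(x)}(x) = a_{h(x)} = c'_{h(x)}(x)$. Extend $j$ to a function $h' \in [q]^S$ arbitrarily on $S \setminus S'$. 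The original $(L,q)$-shattering of $S$ by $R$ yields some $f \in R$ with $f(x) = c_{h'(x)}(x)$ for every $x \in S$; restricted to $S'$ this gives $f(x) = c'_{h(x)}(x)$, as required.

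There is no real obstacle here: the only thing to double-check is that relabeling the witnesses on $S'$ does not lose the shattering property, which is handled by pulling back the desired pattern $h$ on $S'$ to a pattern $h'$ on all of $S$ via the per-point permutation $j$. No non-trivial counting or probabilistic estimate is needed beyond the pigeonhole bound $|S'| \ge |S|/\binom{|L|}{q}$ already in the statement.
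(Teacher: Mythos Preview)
Your proof is correct and is precisely the pigeonhole argument the paper has in mind; the paper only states ``easily seen by the pigeonhole principle'' without spelling out the details, and your write-up fills those in faithfully. The one extra observation the paper makes explicit (and which your argument also yields) is that when $|L'|=q$, the conclusion ``$S'$ is $(L',q)$-shattered'' is equivalent to saying every function from $S'$ to $L'$ is the restriction of some $f\in R$.
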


The observation is easily seen by the pigeonhole principle. Note
that by the definition of $(L, q)$-shattering, if $|L'| = q$, then
we have that every function from $S'$ to $L'$ is a restriction of an
element in~$R$. In the context of combinatorial auctions, this means
that all possible allocations of items in~$S'$ to the $q$~bidders
in~$L'$ are in the range~$R$ under restriction. It is this form of
``strong'' shattering that is in use in the main body of the paper.
In the following lemmas, we will show the existence of large $(L,
q)$-shattered sets, being aware that an application of the above
observation implies a subset being ``strongly'' shattered, of size
only a constant factor smaller.

\begin{lemma} \label{lem:shatter1}
For every integer $n \geq 2$, $q$, $2 \leq q \leq n$, and every
$\epsilon
> 0$, there is a $\delta>0$ such that the following holds. For every
pair of finite sets $M,N$ with $|N|=n$ and every set $R$ of more
than $(q - 1 + \epsilon)^ {|M|}$ elements of $N^M$ there is a set
$S$ of at least $\delta |M|$ elements of $M$ such that $S$ is $(V,
q)$-shattered by $R$.
\end{lemma}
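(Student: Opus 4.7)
The plan is to establish a $q$-ary Sauer–Shelah-type counting inequality and then extract a linear-sized shattered subset via a standard entropy estimate on binomial coefficients. Concretely, I aim to prove by induction on $m=|M|$ that if no subset of $M$ of size $d$ is $(N,q)$-shattered by $R$, then
\[
|R| \;\le\; A^{d}\sum_{i=0}^{d-1}\binom{m}{i}(q-1)^{m-i},
\qquad A \;:=\; (n-q+1)\binom{n}{q}.
\]

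For the inductive step, fix a distinguished item $x_0\in M$; for each $a\in N$ let $R_a=\{f\in R:f(x_0)=a\}$ with projection $R'_a\sse N^{M\sm\{x_0\}}$, and set $R'=\bigcup_a R'_a$. For $g\in R'$ let $A(g)=\{a\in N:(g,a)\in R\}$, so $|R|=\sum_g |A(g)|$. Splitting according to $|A(g)|<q$ versus $|A(g)|\ge q$ yields $|R|\le (q-1)|R'|+(n-q+1)|R^{\ast}|$, where $R^{\ast}=\{g:|A(g)|\ge q\}\sse\bigcup_{\{a_1,\ldots,a_q\}}(R'_{a_1}\cap\cdots\cap R'_{a_q})$, so $|R^{\ast}|\le \binom{n}{q}\max_{(a_1,\ldots,a_q)}|R'_{a_1}\cap\cdots\cap R'_{a_q}|$. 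To invoke the inductive hypothesis I need two facts directly from the shattering definition: (i) any subset of $M\sm\{x_0\}$ that is $(N,q)$-shattered by $R'$ is also $(N,q)$-shattered by $R$, so $R'$ has shatter-dimension $<d$; and (ii) if $T\sse M\sm\{x_0\}$ is $(N,q)$-shattered by $R'_{a_1}\cap\cdots\cap R'_{a_q}$ with witnesses $c_1,\ldots,c_q:T\to N$, then $T\cup\{x_0\}$ is $(N,q)$-shattered by $R$ with witnesses $c_i\cup(x_0\mapsto a_i)$, so each such intersection has shatter-dimension $<d-1$. Feeding in the inductive bounds and applying Pascal's identity $\binom{m}{i}=\binom{m-1}{i}+\binom{m-1}{i-1}$ yields precisely the claimed $F(m,d)$, closing the induction; the base case $m=0$ is immediate.

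For the asymptotic extraction, suppose $|R|>(q-1+\eps)^m$ and, for contradiction, that no subset of size $\ceil{\delta m}$ is $(N,q)$-shattered. Using $(q-1)^{m-i}\le (q-1)^m$ for $q\ge 2$, the counting bound gives $A^{\delta m}\binom{m}{<\delta m}(q-1)^m > (q-1+\eps)^m$, equivalently
\[
\delta\log_2 A \;+\; \tfrac{1}{m}\log_2\!\binom{m}{<\delta m}\;>\;\log_2\!\left(1+\tfrac{\eps}{q-1}\right).
\]
The right-hand side is a positive constant depending only on $\eps,q$, while by the standard entropy estimate $\binom{m}{<\delta m}\le 2^{H_2(\delta)m}$ (for $\delta\le 1/2$) the left-hand side is at most $\delta\log_2 A + H_2(\delta)$, which tends to $0$ as $\delta\to 0$. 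Choosing $\delta>0$ sufficiently small (depending on $\eps,q,n$) produces the desired contradiction, so a $(N,q)$-shattered subset of size $\ge\delta m$ must exist.

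The main obstacle is implication (ii) above, and specifically the fact that only its \emph{forward} direction is needed: shattering of $T$ by the intersection $R'_{a_1}\cap\cdots\cap R'_{a_q}$ gives shattering of $T\cup\{x_0\}$ by $R$, whereas the converse can fail because the $q$ witnesses for different values at $x_0$ may disagree off $T\cup\{x_0\}$. This is the $q$-ary analogue of Pajor's trick in the classical proof of Sauer–Shelah, and the multiplier $A=(n-q+1)\binom{n}{q}$ exactly captures the overhead introduced by allowing arbitrary $q$-subsets of an $n$-letter alphabet at each coordinate (for $n=q=2$ it reduces to $A=1$, recovering Pajor's inequality).
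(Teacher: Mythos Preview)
Your proposal is correct and follows essentially the same route as the paper: both fix a coordinate, split projections according to whether they have at least $q$ preimages, use your observation (ii) to show that the ``rich'' projections have strictly smaller shatter dimension, derive the recurrence $F(m,d)\le (q-1)F(m-1,d)+C\cdot F(m-1,d-1)$, solve it as a Sauer--Shelah-type sum, and finish via a Stirling/entropy estimate. Your constant $A=(n-q+1)\binom{n}{q}$ is a hair sharper than the paper's $n\binom{n}{q}$ (obtained by writing $|A(g)|=(q-1)+(|A(g)|-(q-1))^{+}$ rather than crudely bounding $|A(g)|\le n$ on the rich part), and your bookkeeping via $R'$ and $\max_{\{a_1,\ldots,a_q\}}|R'_{a_1}\cap\cdots\cap R'_{a_q}|$ replaces the paper's explicit partition into $Q_0,\ldots,Q_{\binom{n}{q}}$, but these are cosmetic differences---the core inductive mechanism is identical.
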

\begin{proof}
Let $F_q(m,n,d)$ denote the maximum cardinality of a set $R
\subseteq A^B$ such that $|A| = n, \, |B| = m,$ and $R$ does not
$(A, q)$-shatter any $(d+1)$-element subset of $B$.

Fix an element $b \in B$.  For each element $f \in R$, let $f_{-b}$
denote the restriction of~$f$ to the set~$B\backslash\{b\}$.  Take
the set of all functions $g: B\backslash\{b\} \to A$ and partition
it into sets $Q_0, Q_1, \cdots, Q_{n \choose q}$ as follows. First,
given an ordered pair $(g,a)$ consisting of a function $g$ from
$B\backslash\{b\}$ to $A$ and an element $a \in A$, let $g*a$ denote
the unique function $f$ from $B$ to $A$ that maps $b$ to $a$ and
restricts to $g$ on $B\backslash\{b\}$. Now define $S(g)$ to be the
set of all $a \in A$ such that $g*a$ is in $R$. Number all the
$q$-element subsets of $A$ from $1$ to ${n \choose q}$, call them
$P_1, P_2, \cdots, P_{n \choose q}$, and let $Q_i$ ($1 \leq i \leq
{n \choose q}$) consist of all $g$ such that $S(g)$ has at least
$q$~elements, and the $q$ smallest elements of $S(g)$
constitute~$P_i$. Finally let $Q_0$ consist of all $g$ such that
$S(g)$ has fewer than $q$~elements.

By our assumption that $R$ does not $(A,q)$-shatter any set of size
greater than~$d$, we have the following facts:
\begin{enumerate}
\item $Q_0$ does not $(A, q)$-shatter any $(d+1)$-element
subset of $B \setminus \{b\}.$  Consequently, $$|Q_0| \leq
F(m-1,n,d).$$

\item For all $i \leq {n \choose q}$, $Q_i$ does not $(A, q)$-shatter
any $d$-element subset of $B \setminus \{b\}.$  Consequently,
$$|Q_i| \leq F_q(m-1,n,d-1).$$
\end{enumerate}

Let $R_i$ denote the set of all $f \in R$ such that $f_{-b}$ is in
$Q_i$, for $0 \leq i \leq {n \choose q}$, then by definition of
$Q_i$, we have $|R_0| \le (q - 1) |Q_0|$, and $|R_i| \le n|Q_i|$ for
$i \le 1$. Since $R_i$'s are disjoint, we have
$$ |R| = \sum_{i = 0}^{n \choose q} |R_i| \le (q-1)|Q_0| + \sum_{i = 1}^{n \choose q} n |Q_i|,$$

\begin{equation}\label {eq:recur} F_q(m,n,d) \le (q-1) F_q(m-1,n,d) + n{n
\choose q} F_q(m-1,n,d-1)
\end{equation}

The recurrence \eqref{eq:recur}, together with the initial condition
$F_q(m,n,0)=(q-1)^m$ for all $m,n$, implies the upper bound
\[
F_q(m,n,d) \leq \sum_{i=0}^d n^i \binom{n}{q}^i \binom{m}{i} (q -
1)^m
\]
Thus, if $F_q(m,n,d) > (q - 1 + \epsilon)^{m}$ then, by using
Stirling's approximation, we see that $d > \delta m$ for some $\delta$
depending only on $\epsilon$ and $n$.
\end{proof}

In Section~\ref{sec:midr_apx} of the paper, we made use of the fact that a
range of allocations shatters a large subset if they generate good
social welfare for many perfect valuations. The condition is
captured by the following definition:

\begin{definition}
For two functions $f,g \in N^M$, their \emph{normalized Hamming
distance} $\Ham(f,g)$ is equal to $\frac{1}{|M|}$ times the number
of distinct $x \in M$ such that $f(x) \neq g(x).$  If $f \in N^M$
and $R \subseteq N^M$, the Hamming distance $\Ham(f,R)$ is the
minimum of $\Ham(f,g)$ for all $g \in R$.
\end{definition}

As each perfect valuation can be seen as a function~$f$ in~$N^M$,
and each allocation can be viewed as a $g \in N^M$, $\Ham(f, g)$ is
how much social welfare is lost by~$g$ on the perfect valuation~$f$.
In the same way, $R$ can be viewed as a range of allocations, and
$\Ham(f, R)$ is the minimum social welfare lost by any of the
allocation in~$R$ on valuation~$f$. If $\Ham(f, R)$ is small for a
large fraction of $f \in N^M$, it means the range achieves a good
approximation of social welfare for a significant portion of the
perfect valuations.

We also note that since $N$ can represent the set of bidders plus a
dummy bidder representing not allocating an item, $N^M$ can express
all allocations including those not allocating all items. On the
other hand, if we restrict the functions so that they can take
values only in a subset $L$ representing the real bidders, then they
represent allocations that do not discard items. This explains the
role played by the set~$L$ in the next lemma.

\begin{lemma}  \label{lem:shatter2}
For every positive real number $\epsilon > 0$, integers $n \geq 2$,
$q$, $2 \le q \le n$, and polynomial $\gamma(n)$, there is a $\delta
> 0$ such that the following holds. For all finite sets $M,N$ and
all subsets $L \subseteq N$ with $|L|=n$, if $R \subseteq N^M$ and
at least $\gamma n^{|U|}$ points $f \in L^U$ satisfy $\Ham(f,R) < 1
- (q - 1)/n - \epsilon,$ then there is a set $S \subseteq M$ such
that $|S|
> \delta |M|$ and $S$ is $(L, q)$-shattered by $R$.
\end{lemma}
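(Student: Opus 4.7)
The plan is to prove Lemma~\ref{lem:shatter2} by a Sauer-Shelah-type recursive counting argument that mirrors the proof of Lemma~\ref{lem:shatter1}, but tracks the size of the covered set $\mathcal{F} := \{f \in L^M : \Ham(f, R) < 1 - (q-1)/n - \epsilon\}$ instead of the cardinality of $R$. I argue by contradiction: assume $R$ does not $(L, q)$-shatter any subset of $M$ of size greater than $d = \delta |M|$, for $\delta > 0$ to be chosen small, and show that $|\mathcal{F}| < \gamma n^{|M|}$.

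For each $f \in \mathcal{F}$ I fix a witness $g_f \in R$ whose agreement set with $f$ has size more than $p|M|$, where $p = (q-1)/n + \epsilon$. Any such witness necessarily takes values in $L$ at all agreement positions, since $f(x) \in L$. Following the template of Lemma~\ref{lem:shatter1}'s proof, I pick a pivot $b \in M$. For each $g' \in R|_{M \setminus \{b\}}$ I define the fiber $S(g') = \{a \in L : g' \ast a \in R\}$ and partition $R|_{M \setminus \{b\}}$ into one class with $|S(g')| < q$ and $\binom{n}{q}$ further classes indexed by the $q$ lex-smallest elements of $S(g')$. I then split $\mathcal{F}$ by the value $f(b)$ and by whether $g_f(b) = f(b)$: in the ``agreeing'' case $f|_{M \setminus \{b\}}$ is covered with threshold shifted from $p|M|$ to $p|M| - 1$ (out of $|M| - 1$); in the ``non-agreeing'' case the threshold stays at $p|M|$ out of $|M| - 1$. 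Combining these with the partition structure yields a recursion of the general form
\[
G(m, d, p) \leq (q-1)\, G(m-1, d, p') + \binom{n}{q} n \, G(m-1, d-1, p''),
\]
where $G(m, d, p)$ denotes the maximum of $|\mathcal{F}|$ over instances with $|M| = m$ satisfying the non-shattering assumption, and $p', p''$ are thresholds shifted from $p$ by $O(1/m)$.

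Unrolling the recursion, and accumulating the threshold shifts (which add up to $O(\log m / m) = o(1)$ over the entire recursion, absorbable into the slack $\epsilon > 0$), one obtains a bound $|\mathcal{F}| \leq \poly(m) \cdot (q - 1 + \epsilon/2)^m \cdot \binom{m}{\leq d}$. Choosing $\delta$ small makes $\binom{m}{\leq d}$ sub-exponential and the right-hand side strictly less than $\gamma n^m$, contradicting the hypothesis.

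The main obstacle is tracking the threshold shifts through the recursion while ensuring the base of the exponential stays strictly below $n$; this is precisely where the margin $\epsilon > 0$ in the Hamming distance hypothesis is used. A secondary issue is confirming that the failure of $R$ to $(L, q)$-shatter large subsets descends to the restrictions $R|_{M \setminus \{b\}}$ used in each recursive step, which follows from the fact that $(L, q)$-shattering lifts through single-coordinate extensions.
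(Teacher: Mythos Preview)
There is a genuine gap at the point where you import the factor $(q-1)$ into your recursion. In the proof of Lemma~\ref{lem:shatter1} that factor comes from bounding the fibers of the restriction map $R \to R|_{M\setminus\{b\}}$: over each $g' \in Q_0$ there are at most $q-1$ extensions $g' \ast a$ lying in $R$. But you are counting elements of $\mathcal{F} \subseteq L^M$, not of $R$, and the map you are implicitly using is $f \mapsto g_f|_{M \setminus \{b\}}$. That map has no comparable fiber bound. Concretely, fix $h \in L^{M \setminus \{b\}}$ and consider its $n$ extensions $f = h \ast a$ for $a \in L$. Even if every one of them lands in your ``agreeing, $Q_0$'' class, the witnesses $g_f$ may all be different, restricting to different $g'_a \in Q_0$ with $a \in S(g'_a)$; nothing forces the values $a$ to lie in the fiber $S(g')$ of a \emph{single} $g'$, so all $n$ extensions can belong to $\mathcal{F}$ simultaneously. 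Hence the inequality $G(m,d,p) \leq (q-1)\,G(m-1,d,p') + \binom{n}{q} n\,G(m-1,d-1,p'')$ is not justified. Two secondary problems: your two-term recursion omits the non-agreeing case $f(b) \neq g_f(b)$, which contributes a factor of $n$ with no drop in $d$; and the threshold shifts are $\Theta(1/m')$ per step with $m'$ the current size, so over the full depth they sum to $\Theta(\log m)$, not $O((\log m)/m)$.

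The paper does not recurse on $\mathcal{F}$ at all. It double-counts pairs $(f,J)$ with $|J| = \epsilon m/2$ and $J \subseteq I(f)$, the agreement set of $f$ with its witness $G(f)$, and uses pigeonhole to isolate one such $J$ hit by many $f$. On that $J$ the restrictions $f|_J$ and $G(f)|_J$ coincide, so each such $f$ yields an element of $L^J$ that is also the restriction to $J$ of an element of $R$; a second pigeonhole then shows the number of distinct such restrictions exceeds $(q-1+\epsilon')^{|J|}$, and Lemma~\ref{lem:shatter1} applies directly. The key conceptual move is to convert the Hamming-covering hypothesis on $\mathcal{F}$ into a pure cardinality lower bound on a restriction of $R$ \emph{before} invoking any Sauer-type recursion, which is exactly the step your direct recursion on $\mathcal{F}$ cannot make.
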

\begin{proof}
Let $m=|M|, r = 1 - (q - 1)/n - \epsilon.$ Let $A$ be the set of all
points $f \in L^M$ such that $\Ham(f,R) < r.$  Let $G$ be a function
from $A$ to $R$ such that $\Ham(f,G(f)) < r$ for all $f \in A$. Let
$I(f)$ denote the set of all $x \in M$ such that $f(x) = G(f)(x)$.
Note that our assumption that $\Ham(f,G(f)) < r$ implies that
$|I(f)| \geq (\frac{q-1}{n}+\epsilon)m$.  The number of pairs
$(f,J)$ such that $f \in A, \, |J| = \epsilon m / 2, \, J \subseteq
I(f)$ is bounded below by $\gamma n^m \cdot
\binom{(1/n+\epsilon)m}{\epsilon m /2}$.
By the pigeonhole principle, there is at least one set $J$ of
$\epsilon m / 2$ elements such that the number of $f \in L^U$
satisfying $J \subseteq I(f)$ is at least
\begin{align*}
\gamma n^m \cdot \left. \binom{(\frac{q-1}{n}+\epsilon)m}{\epsilon m
/2} \right/ \binom{m}{\epsilon m / 2} &= \gamma n^m \frac{
((\frac{q-1}{n} + \epsilon) m)! \; ((1 - \epsilon/2)m)! }
{ ((\frac{q-1}{n} + \epsilon/2)m)! \; m! } \\
& > \gamma n^m \cdot \frac{(\frac{q-1}{n} + \epsilon)m}{m} \cdot
\frac{(\frac{q-1}{n}+\epsilon)m-1}{m-1} \cdots
\frac{(\frac{q-1}{n}+\epsilon/2)m}{(1-\epsilon/2)m} \\
& > \gamma n^m \left( \frac{\frac{q-1}{n} + \epsilon/2}{1 -
\epsilon/2} \right)^{\epsilon m / 2}.
\end{align*}
Fix such a set $J$.  For every $f \in L^M$ satisfying $J \subseteq
I(f)$, the restriction of $f$ to $J$ is an element $g \in L^J;$ note
that $g$ is also the restriction of $G(f)$ to $J$.  For any single
$g \in L^J$, the number of $f \in L^M$ that restrict to $g$ is
bounded above by $n^{m-\epsilon m / 2}$.  Applying the pigeonhole
principle again, we see that the number of distinct $g \in L^J$ that
occur as the restriction of some $f \in A$ satisfying $J \subseteq
I(f)$ must be at least
\begin{align*}
\left. \gamma n^m \left( \frac{\frac{q-1}{n} + \epsilon/2}{1 -
\epsilon/2} \right)^{\epsilon m / 2} \right/ n^{m - \epsilon m / 2}
&= \gamma \left( \frac{q - 1 + \epsilon n / 2}{1 - \epsilon / 2}
\right)^{\epsilon m / 2}.
\end{align*}
We now have the following situation.  There is a set $J$ of
$\epsilon m / 2$ elements, and at least $\gamma \cdot (q - 1 +
\epsilon n / 2)^{|J|}$ elements of $L^J$ occur as the restriction of
an element of $R$ to $J$.  It follows from Lemma~\ref{lem:shatter1}
that $J$ has a subset of $S$ of at least $\delta m$ elements such
that $S$ is $(L, q)$-shattered by $R$.
\end{proof}

\begin{prevproof}{Lemma}{lem:shatter-covering}
Combining Lemma~\ref{lem:shatter1}, Lemma~\ref{lem:shatter2} and
Observation~\ref{obs:strongshatter}, we immediately get
Lemma~\ref{lem:shatter-covering}.
\end{prevproof}

Now we are ready to show the next shattering lemma. The
$k$-partitions represent allocations to $k$-bidders, allowed to
discard items. In the proof we occasionally see them as partial
functions, in a way very similar to that in the previous lemmas.

To state the lemma succinctly, we denote by $t(k, \ell, \epsilon, m,
r)$ the smallest number of subsets that are $([k], 2)$-shattered by
any set~$\mathcal{T}$ of $k$-partitions of a set~$M$, where $|M| =
m$, $|\mathcal{T}| = r$, and every $\ell$ partitions
from~$\mathcal{T}$ are not pairwise $\epsilon$-close (see
Definition~\ref{def:epsilonfar}).

\begin{lemma}\label{lemma:shatter3}
For every integers $k \ge 2$, $\ell$ and~$m$, every $\epsilon > 0$,
there exists an $\alpha > 0$ such that $t(k, \ell, \epsilon, m, r)
\ge r^\alpha$ for every~$r$.
\end{lemma}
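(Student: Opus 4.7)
My plan is to prove the bound in two stages: first, extract from $\mathcal{T}$ a pairwise $\epsilon$-far subfamily $\mathcal{T}'$ whose size is a power of $r$; then show that such a pairwise $\epsilon$-far subfamily must $([k],2)$-shatter some subset $S$ of size $\Omega(\log |\mathcal{T}'|)$. By Observation~\ref{obs:strongshatter}-style reasoning, every subset of an $([k],2)$-shattered set is itself $([k],2)$-shattered, so this immediately yields $2^{|S|} \geq r^\alpha$ distinct shattered subsets, which is exactly what we need.

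For the first stage, I form a graph $G$ on vertex set $\mathcal{T}$ by placing an edge between $T$ and $T'$ whenever they are $\epsilon$-close (the negation of $\epsilon$-far). The hypothesis that no $\ell$ partitions are pairwise $\epsilon$-close is precisely the statement that $G$ contains no $K_\ell$. By the standard Ramsey bound $R(\ell,s) \leq \binom{\ell+s-2}{\ell-1}$, any $K_\ell$-free graph on $r$ vertices contains an independent set of size at least $c_\ell\, r^{1/(\ell-1)}$, and such an independent set in $G$ corresponds to a pairwise $\epsilon$-far subfamily $\mathcal{T}' \subseteq \mathcal{T}$ of that size. Henceforth write $r' = |\mathcal{T}'| \geq c_\ell\, r^{1/(\ell-1)}$.

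For the second stage, I restrict to a random sub-universe $M' \subseteq M$ of size $m' = C \log r'$, where $C$ is chosen so that $r' > (1+\epsilon/4)^{m'}$ and so that, with positive probability, (i) the restrictions of the partitions in $\mathcal{T}'$ to $M'$ remain pairwise distinct, and (ii) each pair $T,T' \in \mathcal{T}'$ still has at least $(\epsilon/2) m'$ items in $M'$ on which they assign distinct real bidders. Both conditions hold with positive probability by standard Chernoff / deletion arguments, using the $\epsilon$-farness on $M$. On this reduced instance, $\mathcal{T}'|_{M'}$ lies in $([k]\cup\{*\})^{M'}$, has cardinality exceeding $(1+\epsilon/4)^{m'}$, and is pairwise $(\epsilon/2)$-far in the restricted sense. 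Applying Lemma~\ref{lem:shatter1} with $n = k+1$ and $q = 2$ yields a subset $S \subseteq M'$ of size $\delta m' = \Omega(\log r')$ that is $([k]\cup\{*\}, 2)$-shattered by $\mathcal{T}'|_{M'}$.

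The main obstacle is upgrading this $([k]\cup\{*\},2)$-shattering to a genuine $([k],2)$-shattering, i.e., excluding the possibility that one of the two shattering labels equals $\ast$. This is exactly where the pairwise $\epsilon$-far hypothesis is essential. If, on a constant fraction of the items of $S$, one of the two shattering labels were $\ast$, then the partitions realizing the various $h \in \{1,2\}^S$ would frequently contain pairs that disagree on $S$ only by an item being discarded in one and allocated in the other --- contributing nothing to the $\sum_{i\neq j} |T_i \cap T'_j|$ count --- which would let me construct two partitions of $\mathcal{T}'$ that are $(\epsilon/2)$-close on $M'$, contradicting (ii). Converting this intuition into a rigorous argument (probably by first passing to a sub-shattering on the items whose two labels lie in $[k]$, then invoking Observation~\ref{obs:strongshatter} to fix a single pair $\{a,b\}$ of real bidders) loses only a constant factor in $|S|$, so the resulting $([k],2)$-shattered set still has size $\Omega(\log r') = \Omega(\log r / (\ell - 1))$, giving $2^{|S|} = r^{\alpha}$ shattered subsets for a suitable $\alpha = \alpha(k,\ell,\epsilon) > 0$.
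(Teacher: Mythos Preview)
Your Stage~1 is sound: the hypothesis ``no $\ell$ partitions are pairwise $\epsilon$-close'' is exactly the statement that the closeness graph is $K_\ell$-free, and the Ramsey bound gives a pairwise $\epsilon$-far subfamily $\mathcal{T}'$ of size $r' \geq c_\ell\, r^{1/(\ell-1)}$. This is a clean reduction to the case $\ell=2$.

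The gap is in Stage~2, specifically in the upgrade from $([k]\cup\{\ast\},2)$-shattering to $([k],2)$-shattering. Lemma~\ref{lem:shatter1} applied with $n=k+1$ and $q=2$ yields a set $S \subseteq M'$ of size $\delta m'$ that is $([k]\cup\{\ast\},2)$-shattered; but the $\delta$ produced by that lemma, given the hypothesis $r' > (1+\epsilon/4)^{m'}$, is a \emph{small} constant (tending to $0$ with $\epsilon$), so $S$ occupies only a small fraction of $M'$. Your proposed contradiction---``construct two partitions of $\mathcal{T}'$ that are $(\epsilon/2)$-close on $M'$''---does not follow. Take the extreme case where every $x\in S$ has labels $(a_x,\ast)$ with $a_x\in[k]$; the partitions $T_h$ realizing the $2^{|S|}$ choices of $h$ are pairwise distinct, but on $S$ no pair ever has a real-bidder disagreement (one side is always $\ast$). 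All of the required $(\epsilon/2)m'$ real-bidder disagreements between any two $T_h,T_{h'}$ can therefore live entirely inside $M'\setminus S$, which has size $(1-\delta)m' \gg (\epsilon/2)m'$. So condition~(ii) is not violated, and nothing forces a constant fraction of the shattering labels to avoid $\ast$. Passing ``to a sub-shattering on the items whose two labels lie in $[k]$'' is the right move \emph{if} such items exist in abundance, but you have not shown that they do.

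By contrast, the paper's proof avoids the $\ast$ issue entirely by never invoking Lemma~\ref{lem:shatter1}. It argues directly on $\mathcal{T}$: group the $r$ partitions into $r/\ell$ groups of size $\ell$, pick an $\epsilon$-far pair from each group, sum their real-bidder disagreements ($\geq \epsilon m r/\ell$ total), and pigeonhole to find an item $x$ and two \emph{real} bidders $i^*,j^*$ such that both $\mathcal{T}_{i^*}=\{T\in\mathcal{T}:T(x)=i^*\}$ and $\mathcal{T}_{j^*}$ have size at least $\frac{\epsilon r}{\ell\binom{k}{2}}$. The key counting step is that if $I,I_{i^*},I_{j^*}$ denote the collections of $([k],2)$-shattered subsets for $\mathcal{T},\mathcal{T}_{i^*},\mathcal{T}_{j^*}$ respectively, then $|I| \geq |I_{i^*}| + |I_{j^*}|$, because any $S$ shattered by both $\mathcal{T}_{i^*}$ and $\mathcal{T}_{j^*}$ yields $S\cup\{x\}$ shattered by $\mathcal{T}$. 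This gives the recurrence $t(k,\ell,\epsilon,m,r) \geq 2\, t\bigl(k,\ell,\epsilon,m,\tfrac{\epsilon r}{\ell\binom{k}{2}}\bigr)$, which unwinds to $t \geq r^\alpha$. Because the pigeonhole step counts only items with two distinct \emph{real}-bidder assignments, the $\ast$ symbol never enters the shattering labels.
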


Note that if $r$ is $2^{\beta m}$ for some $\beta > 0$, the
conclusion implies the existence of a subset of size $\gamma m$ that
is $([k], 2)$-shattered by the set of partitions, for some $\gamma
> 0$. The proof also works if $R$ is a multi-set.

\begin{proof}
Let $\mathcal{T}$ be any set of $k$-partitions of~$M$, $|M| = m$,
$|\mathcal{T}| = p$, and every $\ell$ elements from~$\mathcal{T}$
are not pairwise $\epsilon$-close. We arbitrarily group the
partitions in~$\mathcal{T}$, so that every group consists of $\ell$
partitions. Then in each group $\{(T_1^i, T_2^i, \cdots, T_k^i) \: |
\: i \in [\ell]\}$, there are at least two partitions that are
$\epsilon$-far, and the size of their ``difference'' $\sum_{i \neq
j} |T_i^s \cap T_j^t|$ is at least $\epsilon m$. Since we have
$r/\ell$ such pairs, the sum of the sizes of ``differences'' will be
at least $\frac{\epsilon m r}{\ell}$. By pigeonhole principle, there
exists an $x \in M$, and $i^*, j^* \in [k]$ ($i^* \neq j^*$) such
that in at least $\frac{\epsilon r}{\ell \binom{k}{2}}$ pairs of
partitions $(T_1, T_2, \cdots, T_k)$ and $(T_1', T_2', \cdots,
T_k')$, $x$ occurs in $(T_{i^*} \cap T_{j^*}') \cup (T_{i^*}' \cap
T_{j^*})$. Now if we denote by $\mathcal{T}_{i^*}$ the set of those
partitions in~$\mathcal{T}$ that map $x$ to~$i^*$, and
$\mathcal{T}_{j^*}$ those mapping $x$ to~$j^*$, then
$|\mathcal{T}_{i^*}| \ge \frac{\epsilon p}{\ell \binom{k}{2}}$ and
$|\mathcal{T}_{j^*}| \ge \frac{\epsilon p}{\ell \binom{k}{2}}$.

Let $I_{i^*}$ denote the set of subsets that are $([k],
2)$-shattered by~$\mathcal{T}_{i^*}$, and similarly $I_{j^*}$ the
set of subsets $([k], 2)$-shattered by~$\mathcal{T}_{j^*}$, $I$ the
set of subsets $([k], 2)$-shattered by~$\mathcal{T}$ itself. We
claim that $|I| \ge |I_{i^*}| + |I_{j^*}|$. To see this, it is clear
that $I_{i^*} \cup I_{j^*} \subseteq I$, and $x$ is not in any of
the set in $I_{i^*} \cup I_{j^*}$. Besides, for every set~$S$ in
$I_{i^*} \cap I_{j^*}$, $S \cup \{x\}$ should be shattered
by~$\mathcal{T}$ according to our definition. Therefore $|I| \ge
|I_{i^*} \cup I_{j^*}| + |I_{i^*} \cap I_{j^*}| = |I_{i^*}| +
|I_{j^*}|$. In other words, $t (k, \epsilon, m, r) \ge 2 t(k,
\epsilon, m, \frac{\epsilon r}{\ell \binom{k}{2}})$. By induction
the lemma is proved.
\end{proof}

\begin{prevproof}{Lemma}{lem:shatter-eps-far}
Combining Lemma~\ref{lemma:shatter3} and
Observation~\ref{obs:strongshatter}, we get
Lemma~\ref{lem:shatter-eps-far}.
\end{prevproof}


\include{app_midrapx}
\section{Omitted Proofs from Section~\ref{sec:MIR-BA}}
\label{sec:app_MIRBA}

\begin{prevproof}{Lemma}{lemma:largeF} This is shown by a
probabilistic argument. We randomly sample a number of covering
$k$-partitions in the following way. Each time we sample a
partition, we decide for each item in~$M$, uniformly at random,
which one of the $k$~subsets it should be placed in. We repeat this
process $n$~times, and get a set of $k$-partitions $\{(T_1^i, T_2^i,
\cdots, T_k^i)\: | \: i \in [n]\}$. Let $A = \{a_1, a_2, \ldots,
a_\ell\} \subseteq [n]$, $B = \{b_1, b_2, \cdots, b_\ell\} \in
[k]^\ell$, let $I_A^B$ denote the event
\begin{equation} \label{eq:I_A^B-def}|T_{b_1}^{a_1} \cap T_{b_2}^{a_2} \cap \cdots \cap
T_{b_\ell}^{a_{\ell}}| > \frac{1}{k^\ell}(1+\epsilon)m.
\end{equation}
The expectation of the left hand side of~\eqref{eq:I_A^B-def}
is~$m/k^\ell$. By Chernoff bound,
$$Pr[I^B_A] \le e^{-\epsilon^2 m/4k^\ell}, \;\; \forall A, B.$$

The probability that $I^B_A$ happens for some~$A$ and~$B$ is upper
bounded by
$$\sum_{A \subseteq [n], |A| = \ell} \sum_{B \in [k]^\ell} Pr[I^B_A]
\le \binom{n}{\ell} k^\ell e^{-\epsilon^2 m /4 k^\ell} \le \left
(\frac{ken}{\ell} \right)^\ell e^{-\epsilon^2 m /4 k^\ell}.$$
Therefore as long as $n < \frac{\ell}{ke} e^{\epsilon^2 m/4 \ell
k^\ell}$, the probability above is smaller than~1, i.e., there exist
$n$~partitions that satisfy the lemma. This completes the proof.
\end{prevproof}

In the proof of Lemma~\ref{lemma:smallwelfare}, we used two forms of
the Chernoff bound:

\begin{theorem} \label{thm:Chernoff} (Chernoff bound): Let $X_1, X_2, \cdots, X_n$ be i.\@i.\@d.\@
random variables such that $X_i \in \{0, 1\}$ and $Pr(X_i = 1) = p$
for every~$i \in [n]$. Let $X = \sum_i X_i$ and $\mu = E[X]$, then

(i) for $\delta > 2e - 1$, $Pr[X > (1 + \delta)\mu] < 2^{-\delta
\mu}$;

(ii) for $0 < \delta < 2e - 1$, $Pr[X > (1 + \delta) \mu] <
e^{-\delta^2 \mu / 4}$.
\end{theorem}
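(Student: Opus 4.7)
The plan is to apply the classical Chernoff-Hoeffding technique based on the exponential moment generating function. For any parameter $t > 0$, I would first use Markov's inequality in the form
$$\Pr[X > (1+\delta)\mu] \;=\; \Pr\bigl[e^{tX} > e^{t(1+\delta)\mu}\bigr] \;\le\; \frac{E[e^{tX}]}{e^{t(1+\delta)\mu}}.$$
Because the $X_i$ are i.i.d.\ Bernoulli$(p)$ variables, the moment generating function factorizes as $E[e^{tX}] = \prod_i E[e^{tX_i}] = (1 - p + p e^t)^n$, and the elementary inequality $1+x \le e^x$ yields $E[e^{tX}] \le e^{np(e^t-1)} = e^{\mu(e^t-1)}$. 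Substituting back gives $\Pr[X > (1+\delta)\mu] \le e^{\mu(e^t - 1 - t(1+\delta))}$.

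Next I would optimize over $t$ by choosing $t = \ln(1+\delta) > 0$, which produces the canonical Chernoff bound
$$\Pr[X > (1+\delta)\mu] \;\le\; \left(\frac{e^\delta}{(1+\delta)^{1+\delta}}\right)^{\mu}.$$
The two parts of the theorem then reduce to comparing this base-$e$ expression against the target decay rates. For part (i), the goal is to verify $(2e)^\delta \le (1+\delta)^{1+\delta}$ whenever $\delta > 2e - 1$; this is immediate because then $1+\delta > 2e$ and hence $(1+\delta)^{1+\delta} \ge (1+\delta)^{\delta} > (2e)^{\delta}$, so raising to the $\mu$-th power gives $\Pr[X > (1+\delta)\mu] < 2^{-\delta\mu}$.

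For part (ii), I would need to establish the inequality $(1+\delta)\ln(1+\delta) - \delta \ge \delta^2/4$ on the range $0 < \delta < 2e - 1$, which upon exponentiating and raising to the $\mu$-th power gives the claim. I would handle this by a Taylor expansion of $h(\delta) = (1+\delta)\ln(1+\delta) - \delta$ at the origin, namely $h(\delta) = \tfrac{\delta^2}{2} - \tfrac{\delta^3}{6} + O(\delta^4)$, which easily dominates $\delta^2/4$ for small $\delta$, together with direct verification on a larger subinterval using derivative monotonicity of $h(\delta) - \delta^2/4$.

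The main obstacle I expect is part (ii): the optimized Chernoff inequality is tight only to leading order, so comparing it against the quadratic lower bound $\delta^2/4$ uniformly on all of $(0, 2e-1)$ requires some care at the upper end of the interval. The cleanest route is to split the range into a small-$\delta$ regime handled by the Taylor bound and a bounded-away-from-zero regime handled by checking the single-variable inequality analytically (or by observing that $h$ is convex in $\ln(1+\delta)$). The remaining steps -- the exponential Markov trick, the factorization of the MGF, and the choice of $t = \ln(1+\delta)$ -- are entirely routine.
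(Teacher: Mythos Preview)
The paper does not prove this theorem at all --- it merely quotes it as the standard Chernoff bound, to be used as a black box in the proof of Lemma~\ref{lemma:smallwelfare}. So there is no argument in the paper to compare against. Your MGF approach is the textbook route, and part~(i) is handled correctly.

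The gap is in part~(ii). After optimizing over $t$ you correctly reduce the claim to the scalar inequality
\[
(1+\delta)\ln(1+\delta)-\delta \;\ge\; \frac{\delta^2}{4} \qquad \text{on } (0,\,2e-1),
\]
and you propose to verify this by a Taylor expansion near zero plus ``direct verification using derivative monotonicity'' on the rest of the interval. But this inequality is \emph{false} near the right endpoint. At $\delta=2e-1$ the left side equals $2e\ln(2e)-(2e-1)=2e\ln 2+1\approx 4.77$, whereas the right side is $(2e-1)^2/4\approx 4.92$; more generally the difference $(1+\delta)\ln(1+\delta)-\delta-\delta^2/4$ changes sign around $\delta\approx 4.1$ and stays negative all the way to $2e-1\approx 4.44$. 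Consequently the canonical bound $\bigl(e^\delta/(1+\delta)^{1+\delta}\bigr)^{\mu}$ is strictly \emph{weaker} than $e^{-\delta^2\mu/4}$ on that subinterval, and no convexity or monotonicity argument on $h(\delta)-\delta^2/4$ can rescue it, because the target inequality simply fails there. Your proposed split into a small-$\delta$ regime and a large-$\delta$ regime therefore cannot close the argument as written; one would need either a sharper starting point than the crude $1+x\le e^x$ estimate on the moment generating function, or a somewhat smaller upper cutoff than $2e-1$ in the statement itself.
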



\section{Omitted Proofs from Section~\ref{sec:complexity}}
\label{app:complexity}

\begin{prevproof}{Lemma}{lem:finite-union}  Suppose $S_1,\ldots,S_k$ are CD sets, with circuit
families $\{\circuit_n^{(i)}\}$ $(1 \leq i \leq k)$ such that 
$\circuit_n^{(i)}$ has size bounded by a polynomial $q_i(n)$
and decides {\sc 3sat} correctly on all instances of size $n \in S_i.$
Let $q(n)$ be a polynomial satisfying $q(n) \geq 
\max_{1 \leq i \leq k} q_i(n)$ for all $n \in \NN.$  
We can obtain a family of circuits $\{\circuit_n\}$ of size
bounded by $q(n)$, by defining $\circuit_n$ to be equal to
$\circuit_n^{(i)}$ if $n$ belongs to $S_i$ but not to
$S_1,\ldots,S_{i-1}$, and defining $\circuit_n$ to be
arbitrary if $n \not\in S_1 \cup \cdots \cup S_k.$
Then $\circuit_n$ decides {\sc 3sat} correctly on all
instances of size $n \in S_1 \cup \cdots \cup S_k$, as
desired.

If $S_1,\ldots,S_k$ are CD sets, $p_1,\ldots,p_k$ are polynomials,
and for $1 \leq i \leq k$ we have a PCD set $T_i \subseteq
\bigcup_{n \in S_i} [n,p_i(n)],$ then we may take $p(n)$ to be
any polynomial satisfying $p(n) \geq \max_{1 \leq i \leq k} p_i(n)$
for all $n \in \NN$, and we may take $S$ to be the set
$S_1 \cup \cdots \cup S_k.$  Then we find that the set 
$T = T_1 \cup \cdots \cup T_k$
is contained in $\bigcup_{n \in S} [n,p(n)].$
This implies that $T$ is PCD, because $S$ is CD.
\end{prevproof}

\begin{prevproof}{Lemma}{lem:pcd}
By our assumption that $\mathcal{L}$ is NP-hard under
polynomial-time many-one reductions, there is
such a reduction from {\sc 3sat} to $\mathcal{L}$.
Since the running time of the reduction is 
bounded by a polynomial $p(n)$, we know that it
transforms a {\sc 3sat} instance of size $n$ into
an $\mathcal{L}$ instance of size at most $p(n)$.
Assume without loss of generality that $p(n)$ is
an increasing function of $n$.

Let $S$ be the set of all $n$ such that 
$\{p(n)+1, p(n)+2, \ldots, p(n+1)\}$ intersects $T$.
The set $S$ is complexity-defying, because for
any $n \in S$ we can construct a polynomial-sized
circuit that correctly decides {\sc 3sat} instances
of size $n$, as follows.  First, we take the given
{\sc 3sat} instance and apply the reduction from the
preceding paragraph to transform it into an 
$\mathcal{L}$ instance of size at most $p(n)$.
Then, letting $m$ be any element of 
$T \cap \{p(n)+1,\ldots,p(n+1)\}$, we 
apply the padding reduction to transform 
this $\mathcal{L}$ instance into another
$\mathcal{L}$ instance of size $m$.  Finally,
we solve this instance using a circuit of size
$\poly(m)$ that correctly decides $\mathcal{L}$
on all instances of size $m$; such a circuit
exists by our assumption on $T$.

For every $m \in T$ there is an $n \in \NN$ such
that $p(n) < m \leq p(n+1)$, and this $n$ belongs
to $S$.  Thus, $T \subseteq \bigcup_{n \in S} [n,p(n+1)],$
and this confirms that $T$ is PCD.
\end{prevproof}


\begin{prevproof}{Lemma}{lem:not-pcd}
Suppose that 
\begin{equation} \label{eq:not-pcd}
\NN \subseteq \bigcup_{n \in S} [n,p(n)]
\end{equation} 
for some
complexity-defying set $S$ and polynomial function $p(n)$.  We
may assume without loss of generality that $p(n)$ is an increasing
function of $n$ and that $p(n) \geq n$ for all $n$.  

Suppose that $\{\circuit_n\}$ is a polynomial-sized circuit family
that correctly decides {\sc 3sat} whenever the input size is in $S$.
We will construct a polynomial-sized circuit family that correctly
decides {\sc 3sat} on all inputs.  The construction is as follows:
given an input size $m$, using \eqref{eq:not-pcd} we may find a
natural number $n$ such that $n \leq p(m) \leq p(n).$  Since 
$p$ is an increasing function, we know that $n \geq m$.  Given
an instance of {\sc 3sat} of size $m$, we first adjoin irrelevant
clauses that don't affect its satisfiability --- e.g. the clause
$(x \vee \overline{x})$ --- until the input size is increased 
to $n$.  This
transformation can be done by a circuit of size $\poly(m)$,
since $n \leq p(m).$  Then we solve the new {\sc 3sat} instance
using the circuit $\circuit_n.$  By our assumption on $S$, this
correctly decides the original {\sc 3sat} instance of size $m$.
As $m$ was arbitrary, this establishes that $\compclass{NP}
\subseteq \compclass{P}/\poly$, as desired.
\end{prevproof}



\section{Additional Preliminaries}
\label{app:prelim}

\subsection{Truthfulness}

A $k$-bidder, $m$-item mechanism for combinatorial auctions with valuations in
$\C$ is a pair $(f,p)$ where $f:\C_m^k \rightarrow \X([m],[k])$ is an
\emph{allocation rule}, and $p=(p_1,\cdots, p_k)$ where $p_i:\C_m^k\rightarrow \mathbb
\RR$ is a \emph{payment scheme}. $(f,p)$ might be either randomized or
deterministic.

We say deterministic mechanism $(f,p)$ is \emph{truthful} if for all
$i$, all $v_i, v'_i$ and all $v_{-i}$ we have that
$v_i(f(v_i,v_{-i})_i)-p_i(v_i,v_{-i})\geq
v'_i(f(v'_i,v_{-i})_i)-p(v'_i,v_{-i})$. A randomized mechanism $(f,p)$
is \emph{universally truthful} if it is a probability distribution
over truthful deterministic mechanisms. More generally, $(f,p)$ is
\emph{truthful in expectation} if for all $i$, all $v_i, v'_i$ and all
$v_{-i}$ we have that $E[v_i(f(v_i,v_{-i})_i)-p(v_i,v_{-i})]\geq
E[v'_i(f(v'_i,v_{-i})_i)-p_i(v'_i,v_{-i})]$, where the expectation is
taken over the internal random coins of the algorithm.




\subsection{Algorithms and Approximation}

Fix a valuation class $\C$. An algorithm $\A$ for combinatorial
auctions with $\C$ valuations takes as input the number of players
$k$, the number of items $m$, and a player valuation profile
$v_1,\ldots,v_k$ where $v_i \in \C_m$. $\A$ must then output an
allocation of $[m]$ to $[k]$. For each $k$ and $m$, $\A$ induces an
allocation rule of $m$ items to $k$ bidders. Our approximation bounds
are all in terms of the number of players. Therefore, in our proofs we
consider combinatorial auctions with a fixed number of bidders $k$.

We say an algorithm $\A$ for $k$-player combinatorial auctions
\emph{achieves an $\alpha$-approximation} if, for every input $m$ and
$v_1,\ldots,v_k$:

\[ E [ v(A(m,v_1,\ldots,v_k)) ] \geq \alpha \max_{S \in \X([m],[k])} v(S)\]

Moreover, we say$\A$ \emph{achieves an $\alpha$-approximation for $m$
  items} if the above holds whenever the number of items is fixed at $m$.

\subsection{A Primer on Non-Uniform Computation}
\label{sec:nonuniform}
Non-uniform computation is a standard notion from complexity theory
(see e.g. \cite{AB09}). We say an algorithm is non-uniform if it takes
in an extra parameter, often referred to as an \emph{advice string}.
However, the advice string is allowed to vary only with the size of
the input (i.e. with $m$). Moreover, the length of the advice string
can grow only polynomially in the size of the input. If a problem
admits a non-uniform polynomial-time algorithm, this is equivalent to
the existence of a family of polynomial-sized boolean circuits for the
problem. When we say a non-uniform algorithm is polynomial-time MIWR,
we mean that the algorithm runs in time polynomial in $m$, and
maximizes over a weighted range, regardless of the advice string. When
we say a non-uniform algorithm achieves an approximation ratio of
$\alpha$ on $m$, we mean that there exists a choice of advice string
for input length $m$ such that the algorithm always outputs an
$\alpha$-approximate allocation.

\end{document}